\newcommand{\ubar}[1]{\underaccent{\bar}{#1}}
\newtheorem{theorem}{Theorem}[section]
\newtheorem{proposition}[theorem]{Proposition}
\newtheorem{corollary}[theorem]{Corollary}
\newtheorem{claim}[theorem]{Claim}
\theoremstyle{definition}
\newtheorem*{definition}{Definition}
\newtheorem{remark}[theorem]{Remark}
\definecolor{backcolour}{rgb}{0.63, 0.79, 0.95}
\lstdefinestyle{mystyle}{
  backgroundcolor=\color{backcolour},
  basicstyle=\ttfamily\footnotesize,
  breakatwhitespace=false,         
  breaklines=true,                 
  captionpos=b,                    
  keepspaces=true,                 
  numbers=left,                    
  numbersep=5pt,                  
  showspaces=false,                
  showstringspaces=false,
  showtabs=false,                  
  tabsize=2
}
\begin{document}
\author{Mark Whitmeyer \thanks{Arizona State University, \href{mailto:mark.whitmeyer@gmail.com}{mark.whitmeyer@gmail.com}. Dedicated to KS; may you have found peace. I thank Ralph Boleslavsky, Chris Chambers, Joseph Whitmeyer, and Kun Zhang for their comments. I'm grateful to the department of economics at Duke University for their hospitality during the writing of this draft. Draft date: \today.}}

\title{Comparative Statics for the Subjective}
\maketitle

\begin{abstract}
I study robust comparative statics for risk-averse subjective expected utility (SEU) maximizers. Starting with a finite menu of actions totally ordered by sensitivity to risk, I identify the transformations of her menu that lead a decision-maker to take a lower action, regardless of her particular utility function or belief. My main results reveal that a robust decrease in the action selected is guaranteed by an intuitive steepening/tilting (in belief space) of the actions' payoffs and necessitates a slightly weaker such steepening. This basic pattern generalizes to a broad class of non-EU preferences.
\end{abstract}

\newpage

\section{Introduction}


Traditional comparative statics exercises typically focus on changes in exogenous parameters, such as prices or endowments, while holding other structural aspects of the decision problem fixed, e.g., the decision-maker's (DM's) utility function or the form uncertainty takes. Here, we explore how transformations of the monetary payoffs associated with a menu of actions robustly influence the choices of a risk-averse subjective-expected-utility (SEU) maximizer.

In our study, a DM chooses an optimal action given her belief and risk preferences (utility function) from a finite menu. We investigate monotone modifications to the state-dependent \textit{monetary} payoffs of actions within the DM's menu. The key question we address is: under what conditions do such transformations lead a risk-averse DM to systematically choose lower actions from her menu, regardless of her specific utility function (in money) or belief?

We specialize to monotone environments in which the state space is a subset of the real numbers, with higher states yielding higher monetary rewards than lower ones for any fixed action; and the action space is also ordered according to the actions' sensitivities to the DM's risk aversion, \textit{in terms of their state-dependent monetary payments}. As shown by \cite*{safety}, this ordering is equivalent to single-crossing differences, given the structure placed upon the setting.

We begin the analysis by studying the case in which the DM's menu is binary. In Proposition \ref{binaryprop}, we show that a transformation robustly reduces the DM's action if and only if the lower action, \(a\), improves with respect to the higher action, \(b\), in a natural sense: the good outcome for \(a\) must improve, the bad outcome for \(b\) must get worse, and the average monetary reward from \(a\) must improve with respect to \(b\)'s at a risk-neutral DM's indifference belief. At a high level, these conditions can be decomposed into two parts: first, \textit{both} actions are made riskier (in belief space, steeper), then (state-wise) dominance improvements and deteriorations are effected for \(a\) and \(b\).

Moving on to the general finite-action environment, we see that the basic insights from the binary setting persist. In particular, Corollary \ref{corr1} spells out a useful consequence of Proposition \ref{binaryprop}: a collection of pairwise steepenings of the actions' payoffs ensures a robust decrease in the action chosen by the DM. Turning our attention to necessity, Theorem \ref{steeptheorem} reveals that if the DM must choose a lower action after the transformation--irrespective of her risk-averse utility or belief--it must be that for every distinct pair of states and every action that is not inferior to some other action in that pair of states--every \textit{relevant} action--all but the lowest action must become worse, money-wise, in the low state, and all but the highest action must improve in the high state.

In short, Theorem \ref{steeptheorem} reveals that if a transformation robustly leads to a lower action, it must be that in every pair of states, \textit{every} relevant action is made riskier, with the possible exception of only the lowest and highest action. There is a small gap between necessity and sufficiency: this arises because the necessity result concerns only relevant actions, a smaller set than those disciplined by the sufficiency result. Intuitively, relevant actions may nevertheless be dominated by mixtures of actions when the DM is not very risk averse, and become undominated only for high levels of risk aversion. Thus, we only need to guarantee a lower action for a restricted set of utility functions, hence the over-strictness of the proposed sufficient conditions.

We go on in \(\S\)\ref{beyondeu} to observe that our basic insight--that robust decreases in action choices are guaranteed by making monetary rewards more sensitive to the state--persists under far more general preferences than expected utility. We look at a DM whose preferences over actions satisfy basic desiderata (e.g., convexity and monotonicity) but may not be probabilistic. In Corollary \ref{corr43}, we show that, nevertheless, a steepening of the actions' monetary rewards effects a robust decrease in the action selected. We finish the paper with several applications.

\subsection{Related Papers}

The advent of the 1990s brought with it a flurry of comparative statics papers, including the seminal works by \cite*{milgrom1994comparing}, \cite*{milgrom1994comparing2}, \cite*{edlin1998strict}, and \cite*{milgrom1994monotone}. The last of these arguably provides the modern foundation for this literature, demonstrating that monotone environments provide just enough structure for establishing clean general results.

The tools introduced by these works (and their forebears, e.g., \cite*{topkis1978minimizing}) have since been widely adopted in economic theory to analyze how structural features of optimization problems affect behavior. Of course, randomness is a central feature of many settings, and so a number of important works study monotone comparative statics in stochastic environments. \cite*{athey2002monotone} is one such paper, as is \cite*{quah2009comparative}. Importantly, in these and the others studying comparative statics under uncertainty, the randomness is a \textit{primitive}, and exogenously specified.

Such specified stochasticity is not present in \cite*{quah2012aggregating}. Instead, they (for all intents and purposes) inhabit an SEU setting, with a random state of the world, and ask when the single-crossing differences property of \cite*{milgrom1994monotone} holds, and so monotone comparative statics obtain, for any distribution over states. \cite*{kartik2023single} tackles a similar problem: as they note, \cite*{quah2012aggregating} study ``...when single crossing is preserved by positive linear combinations. On the other hand, our [\cite*{kartik2023single}] problem turns on arbitrary linear combinations preserving single crossing.''

In short, \cite*{quah2012aggregating} and \cite*{kartik2023single} are robust comparative statics exercises, where the robustness is distributional. \cite*{safety} impose this robustness as well and, notably, show that in monotone environments, ``high'' actions are precisely those that become relatively less attractive as a DM becomes more risk averse. Here, in contrast to these works, we ask for robustness both with respect to a DM's belief (distributional) and utility function (preferences).

Three other papers are closely related to this one. In \cite*{hmab}, a coauthor and I fix a binary menu of actions, \(a\) and \(b\), and ask what transformations of \(a\) make it robustly more attractive than \(a\) versus \(b\). In \cite*{calib}, I conduct a related exercise to \cite*{hmab}'s in order to scrutinize \cite*{rabin2000risk}'s calibration paradox. Special mention is due to \cite{chompchomp}, who conducts a binary-menu study analogous to \(\S\)\ref{binaryenviron}, but when beliefs are known. In short, his exercise is one in which specified \textit{lotteries} are altered; in stark contrast to this paper, in which the agent's belief, which generates the randomness, is a free parameter. We discuss his work further in \(\S\)\ref{binarymenu}.

\section{The Environment}

We study \textcolor{Rhodamine}{Monotone Decision Problems},
\(\left(A,\Theta\right)\), consisting of a compact set of states \(\Theta \subset \mathbb{R}\); and a finite set of actions, \(A\).  \(\Delta \equiv \Delta \left(\Theta\right)\) is the set of all Borel probability measures on \(\Theta\). Each action \(a \in A\) is a bounded, increasing function \(a \colon \Theta \to \mathbb{R}\), with \(a_\theta\) denoting \(a(\theta)\). We stipulate that no action state-wise dominates another: for any distinct \(a, b \in A\), there exist distinct states \(\theta, \theta' \in \Theta\) such that \(a_{\theta} > b_{\theta}\) and \(a_{\theta'} < b_{\theta'}\).\footnote{Henceforth, we omit the ``distinct'' modifier when it is obvious.} Finally, we assume that the actions in \(A\) are totally ordered by single-crossing: \(b \triangleright a\) if 
\[\quad b_{\theta} \underset{(>)}{\geq} a_{\theta} \ \Rightarrow \ b_{\theta'} \underset{(>)}{\geq} a_{\theta'} \text{ for any } \theta' > \theta\text{.}\]

Action set \(A\) undergoes a monotone, order-preserving \textcolor{Rhodamine}{Transformation} if for each action \(a \in A\) its payoff in every state \(\theta\) is transformed from \(a_\theta\) to \(\hat{a}_\theta\), where for all \(a, b \in A\), for all \(\theta, \theta' \in \Theta\),
\[a_{\theta'} \underset{(>)}{\geq} a_{\theta} \ \Rightarrow \ \hat{a}_{\theta'} \underset{(>)}{\geq} \hat{a}_{\theta}, \qquad \text{and} \qquad b_{\theta} \underset{(>)}{\geq} a_{\theta} \ \Rightarrow \ \hat{b}_{\theta} \underset{(>)}{\geq} \hat{a}_{\theta}\text{.}\]
That is, the transformations under consideration preserve the ordinal rankings of monetary payoffs both between actions, for every fixed state; and between states, for every fixed action.

There is a decision-maker (DM), who is a risk-averse subjective-utility maximizer, with a belief \(\mu \in \Delta\) and a continuous, strictly increasing, and concave utility function, \(u \colon \mathbb{R} \to \mathbb{R}\). \(\mathcal{U}\) denotes the class of such utility functions. Given the DM's belief \(\mu \in \Delta\) and utility function \(u \in \mathcal{U}\), before her set of actions \(A\) is transformed, her set of optimal actions is 
\[A^* = A^*\left(\mu,u\right) \coloneqq \left\{a \in A \colon \mathbb{E}_\mu\left[u\left(a_{\theta}\right)\right] \geq \max_{a' \in A} \mathbb{E}_\mu\left[u\left(a_{\theta}'\right)\right]\right\}\text{.}\]
After \(A\) is transformed, her set of optimal actions is
\[\hat{A}^* = \hat{A}^*\left(\mu,u\right) \coloneqq \left\{a \in A \colon \mathbb{E}_\mu\left[u\left(\hat{a}_{\theta}\right)\right] \geq \max_{a' \in A} \mathbb{E}_\mu\left[u\left(\hat{a}_{\theta}'\right)\right]\right\}\text{.}\]
\begin{definition}
    \(A^*\) \textcolor{Rhodamine}{Dominates} \(\hat{A}^*\) \textcolor{Rhodamine}{in the Strong Set Order} if for any \(a' \in \hat{A}^*\) and \(a \in A^*\), we have \(\max\left\{a,a'\right\} \in A^*\) and \(\min\left\{a,a'\right\} \in \hat{A}^*\).
\end{definition}
\begin{definition}
    A transformation \textcolor{Rhodamine}{Reduces the DM's Action} if for any \(\mu \in \Delta\) and \(u \in \mathcal{U}\), \(A^*\) dominates \(\hat{A}^*\) in the strong set order.
\end{definition}
In order to avoid needless repetition, henceforth, when we say a pair of states \(\theta, \theta' \in \Theta\), we understand them not only to be distinct but with \(\theta < \theta'\). 

Before going on to the analysis, it is useful to explicitly spell out the exercise at hand. A typical comparative statics paper, like \cite{topkis1978minimizing}, \cite{milgrom1994monotone}, or \cite{quah2009comparative}, produces theorems showing conditions on primitives that makes optima move up or in some predictable way. This is our exercise as well: the primitives we are altering are the state-dependent monetary payoffs to the actions. This formulation allows us to ask to an extreme form of robustness in ``predictable movement:'' the decrease in optimizers must hold for all beliefs and risk-averse utilities.



\section{The Analysis}

\subsection{A Binary Menu}\label{binarymenu}

We begin with a two-action menu, \(\left\{a,b\right\}\), with \(b \triangleright a\). Recall that this means that \(b\) single-crosses \(a\) (from below). We introduce the notation \[\mathcal{A} \coloneqq \left\{\theta \in \Theta \colon a_{\theta} > b_{\theta}\right\} \quad \text{and} \quad \mathcal{B} \coloneqq \left\{\theta' \in \Theta \colon a_{\theta'} < b_{\theta'}\right\} \text{.}\]
\(\mathcal{A}\) and \(\mathcal{B}\) are the sets of states in which \(a\) yields a strictly higher payoff than \(b\) and vice versa. Naturally, single-crossing implies that \(\mathcal{A}\) consists of low states and \(\mathcal{B}\) consists of high states: \(\sup\left\{\theta \colon \theta \in \mathcal{A}\right\} \leq \inf\left\{\theta' \colon \theta' \in \mathcal{B}\right\}\).
\begin{definition}\label{steepdef}
    Fix actions \(a\) and \(b\) with \(b \triangleright a\). We say that they are \textcolor{Rhodamine}{Made Steeper} if for all \(\theta \in \mathcal{A}\) and \(\theta' \in \mathcal{B}\),
    \begin{enumerate}
        \item \(\hat{b}_{\theta} \leq b_{\theta}\);
        \item \(a_{\theta'} \leq \hat{a}_{\theta'}\); and
        \item A \textbf{Super-Actuarial Improvement} of \(a\) versus \(b\) occurs:
        \[\label{in1}\tag{\(1\)}\frac{\min\left\{\hat{a}_\theta, a_\theta\right\} - \hat{b}_\theta}{a_\theta - b_\theta} \geq \frac{\hat{b}_{\theta'} - \hat{a}_{\theta'}}{\min\left\{\hat{b}_{\theta'}, b_{\theta'}\right\}-a_{\theta'}}\text{.}\]
    \end{enumerate} \end{definition}

\begin{proposition}\label{binaryprop}
    With a two-action menu, \(\left\{a,b\right\}\), a transformation reduces the DM's action if and only if \(a\) and \(b\) are made steeper.
\end{proposition}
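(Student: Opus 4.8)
The plan is to translate the strong-set-order requirement into a clean sign implication, then peel off the two quantifiers one at a time: first the belief $\mu$ (via a convexity argument that collapses everything to pairwise, two-state inequalities), and only then the utility $u$ (via a self-contained lemma about concave functions). Begin by unpacking what ``reduces the DM's action'' means for a two-element menu. Writing $V(\mu,u)=\mathbb{E}_\mu[u(b_\theta)]-\mathbb{E}_\mu[u(a_\theta)]$ and letting $\hat V(\mu,u)$ be its post-transformation analogue, a direct check of the three nontrivial configurations of the strong set order on the lattice $\{a,b\}$ shows that the transformation reduces the DM's action if and only if, for every $\mu\in\Delta$ and $u\in\mathcal{U}$, $\hat V>0\Rightarrow V>0$ and $\hat V\geq0\Rightarrow V\geq0$; that is, whenever $b$ is (weakly) chosen after, it was (weakly) chosen before, with matched strictness.

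Next I would fix $u$ and dispatch the ``for all $\mu$'' quantifier. Let $g(\theta)=u(b_\theta)-u(a_\theta)$ and $\hat g(\theta)=u(\hat b_\theta)-u(\hat a_\theta)$. Because the transformation preserves the between-action order in each state, both $g$ and $\hat g$ are negative on $\mathcal{A}$ and positive on $\mathcal{B}$, so the planar points $P_\theta=(g(\theta),\hat g(\theta))$ lie in the open third quadrant for $\theta\in\mathcal{A}$ and the open first quadrant for $\theta\in\mathcal{B}$ (states with $a_\theta=b_\theta$ contribute the origin and are immaterial). As $\mu$ ranges over $\Delta$, the pair $(\mathbb{E}_\mu g,\mathbb{E}_\mu\hat g)$ traces out $\mathrm{conv}\{P_\theta\}$, and the sign implication says exactly that this hull never meets $\{x<0,\,y\geq0\}$. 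Splitting a convex combination into its $\mathcal{A}$- and $\mathcal{B}$-parts reduces the question to segments joining $\mathrm{conv}\{P_\theta:\theta\in\mathcal{A}\}$ to $\mathrm{conv}\{P_\theta:\theta\in\mathcal{B}\}$, and the obstruction is governed by the determinant $g(\theta)\hat g(\theta')-g(\theta')\hat g(\theta)$. Since this determinant is bilinear and the weights are nonnegative, it suffices to control it on the extreme points: the hull avoids the bad region if and only if, for every pair $\theta\in\mathcal{A}$, $\theta'\in\mathcal{B}$,
\[\frac{u(a_\theta)-u(b_\theta)}{u(b_{\theta'})-u(a_{\theta'})}\leq\frac{u(\hat a_\theta)-u(\hat b_\theta)}{u(\hat b_{\theta'})-u(\hat a_{\theta'})}.\]
Hence the transformation reduces the DM's action if and only if this inequality holds for every pair of states and every $u\in\mathcal{U}$.

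The heart of the argument is then a lemma: for a fixed pair, the displayed inequality holds for all concave increasing $u$ if and only if conditions 1--3 hold at that pair. For sufficiency I would use concavity to bound each of the four increments by slopes over extreme subintervals; the two minima in the super-actuarial inequality appear precisely because a sufficiently risk-averse $u$ credits the low-state gain of $a$ over $b$ only up to $\min\{\hat a_\theta,a_\theta\}$ and measures the high-state gain of $b$ over $a$ against the baseline $\min\{\hat b_{\theta'},b_{\theta'}\}$ rather than the raw endpoints. For necessity I would exhibit explicit violators: piecewise-linear utilities with a single kink at the relevant min-endpoint to break the super-actuarial inequality, and limiting extremely-risk-averse and risk-neutral utilities to force conditions 1 and 2 (which, notably, are invisible to point-mass beliefs, since $g$ and $\hat g$ share a sign pattern, and so must be extracted from genuinely two-state tests). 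Aggregating this per-pair lemma over all $\theta\in\mathcal{A}$, $\theta'\in\mathcal{B}$ returns exactly ``$a$ and $b$ are made steeper,'' closing both directions.

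I expect the lemma of the previous paragraph to be the main obstacle. The difficulty is that the displayed inequality is bilinear, not linear, in $u$, so one cannot simply verify it on a spanning family of utilities; instead one must pin down the genuinely worst-case concave $u$, and this requires a case analysis on the relative positions of the four intervals $[b_\theta,a_\theta]$, $[\hat b_\theta,\hat a_\theta]$, $[a_{\theta'},b_{\theta'}]$, $[\hat a_{\theta'},\hat b_{\theta'}]$ (only the coarse ordering ``low intervals lie left of high intervals'' is given for free). Getting the two minima to emerge cleanly, and confirming that no placement of kinks beats the proposed extremizers, is the delicate computation on which the exact equivalence hinges.
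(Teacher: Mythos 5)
Your handling of the belief quantifier is correct and genuinely different from the paper's. Where the paper applies the Krein--Milman theorem to the set of beliefs at which \(a\) is preferred to \(b\) (whose extreme points are simplex vertices plus the edge indifference beliefs \(\mu_u^{\theta,\theta'}\)), you work in the plane of the pair \(\left(\mathbb{E}_\mu g, \mathbb{E}_\mu \hat g\right)\) and use bilinearity of the determinant to collapse hull-avoidance to extreme pairs. Both routes land in exactly the same place: the transformation reduces the DM's action if and only if, for every \(u \in \mathcal{U}\) and every pair \(\left(\theta,\theta'\right) \in \mathcal{A}\times\mathcal{B}\),
\[\frac{u(a_\theta)-u(b_\theta)}{u(b_{\theta'})-u(a_{\theta'})}\leq\frac{u(\hat a_\theta)-u(\hat b_\theta)}{u(\hat b_{\theta'})-u(\hat a_{\theta'})}\text{,}\]
which is algebraically identical to the paper's condition \(\mu_u^{\theta,\theta'} \leq \hat\mu_u^{\theta,\theta'}\). (Some care is needed with closed versus ordinary convex hulls and with the \(y\)-axis boundary of your ``bad region,'' but those details can be patched.)

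The genuine gap is that everything after this reduction---the entire content connecting the displayed family of inequalities to Definition \ref{steepdef}---is asserted rather than proved. Your ``heart of the argument'' lemma, that the inequality holds for all concave increasing \(u\) if and only if conditions 1--3 hold at the pair, is precisely where the paper's appendix does its work, and you explicitly defer it: you name the intended extremizers (single-kink piecewise-linear utilities, risk-neutral limits) but concede that the case analysis, the emergence of the two minima, and the verification that no other kink placement is worse remain open. In the paper, sufficiency is a chain of Three-chord-lemma estimates split on whether \(\hat b_{\theta'} \geq b_{\theta'}\) or not, and necessity is a six-case contraposition in which each failing configuration is paired with an explicit utility of the form \(u(x) = \min\left\{x, \iota x + (1-\iota)\kappa\right\}\) with the kink \(\kappa\) placed at \(\min\{a_\theta,\hat b_\theta\}\), at \(a_{\theta'}\), at \(a_\theta\), or at \(\hat b_{\theta'}\) depending on the case, and \(\iota\) taken small; moreover one configuration (the paper's Case 4, where \(\hat a_\theta > a_\theta\) and \(\hat b_{\theta'} < b_{\theta'}\)) turns out to make Inequality \ref{in1} hold automatically, a fact your plan does not anticipate but which is needed for the contrapositive case analysis to be exhaustive. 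As it stands, your proposal correctly and cleanly reframes the problem, then stops at its hardest step, so neither direction of the equivalence is actually established.
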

The proof lies in Appendix \ref{binarypropproof}. For any pair \(\left(\theta,\theta'\right) \in \mathcal{A} \times \mathcal{B}\), we see that action \(b\) must become worse in the low state (\(\theta\)) and action \(a\) must improve in the high state. In fact, the proposition reduces to first tilting \(a\) and \(b\) ``up''--by raising the payoffs to each in the high state and lowering the payoffs to each in the low state--then effecting a (state-wise) dominance improvement of \(a\) and worsening of \(b\).

Superficially, the sufficiency direction of this proposition appears to follow from the fact that under expected utility, preferences are convex. That is, first take a convex combination of actions \(a\) and \(b\), rotate \(a\) ``toward'' \(b\)--for some \(\lambda \in \left[0,1\right]\), in every \(\theta\), take payoffs \(\lambda a_\theta + (1-\lambda) b_\theta\)--then rotate \(b\) ``away'' from the new \(a\), before finally effecting state-wise dominance improvements and deteriorations of \(a\) and \(b\), respectively. And indeed \textit{this is sufficient}; in fact, as we see in \(\S\)\ref{beyondeu}, it is sufficient for preferences far more general than EU. Nor does this argument make use of our monotone environment.

However, the conditions in the proposition are weaker than those appealed to in the previous paragraph: in the proposition, we make use of the extra structure afforded by expected utility and monotonicity. Namely, expected utility allows us to first jointly translate \(a\) and \(b\) vertically, before doing the rotations toward and away mentioned in the previous paragraph. We use monotonicity in our proof of necessity as it greatly reduces the number of cases we need to check in our proof by contraposition. 

It is also worth highlighting that this proposition is true even \textit{if we do not assume that the DM's utility function is state-independent.} Likewise, our belief-robust conception makes our findings apply in an obvious sense to incomplete preferences \`{a} la \cite{bewley2002knightian}.

\subsubsection{Known Lotteries}

Our study demands robustness along two dimensions, beliefs and utilities. One could instead fix the DM's belief and ask for robust comparative statics over all utility functions within a certain class. \cite{chompchomp} does exactly that in the binary-menu environment. Specifically, for four lotteries (cdfs) \(L_1\), \(L_2\), \(\hat{L}_1\) and \(\hat{L}_2\), he asks when it is it the case for a risk-averse DM that whenever \(L_1 \succeq L_2\), \(\hat{L}_1 \succeq \hat{L}_2\).\footnote{\cite{chompchomp} also characterizes this implication for DM's whose utilities are strictly increasing but not necessarily concave.} In the binary-menu environment, our research question (what transformations reduce the DM's action), in turn, can be rephrased as: for four single-crossing and monotone actions \(a\), \(b\), \(\hat{a}\) and \(\hat{b}\), when it is it the case for a risk-averse DM that whenever \(a \succeq b\), \(\hat{a} \succ \hat{b}\)?

In his Theorem 2, \cite{chompchomp} shows that his desired implication is equivalent to (taking the support of the lotteries to be \(\left[\ubar{x},\bar{x}\right]\)), for all \(x \in \left[\ubar{x},\bar{x}\right]\),
\[\tag{\(2\)}\label{in2}\int_{\ubar{x}}^x\left(L_1(s) - L_2(s)\right)ds \leq 0 \quad \Rightarrow \quad \int_{\ubar{x}}^x\left(\hat{L}_1(s) - \hat{L}_2(s)\right)ds \leq 0\text{,}\]
and
\[\tag{\(3\)}\label{in3}
\sup_{x \in X_{-}} \frac{\int^{x}_{\ubar{x}} \left(\hat{L}_1(s) - \hat{L}_2(s) \right) ds}{\int^{x}_{\ubar{x}}\left( L_1(s) - L_2(s) \right) ds}
\leq 
\inf_{x \in X_{+}} \frac{\int^{x}_{\ubar{x}} \left(\hat{L}_1(s) - \hat{L}_2(s) \right) ds}{\int^{x}_{\ubar{x}}\left( L_1(s) - L_2(s) \right) ds}\text{,}
\]
where
\[X_{-} \coloneqq \left\{ x \in X \,\middle|\, \int^{x}_{\ubar{x}}\left(\hat{L}_1(s) - \hat{L}_2(s) \right) ds > 0 \right\}, 
\text{ and }
X_{+} \coloneqq \left\{ x \in X \,\middle|\, \int_{\ubar{x}}^{x} \left(L_1(s) - L_2(s) \right) ds < 0 \right\}\text{.}
\]
In this paper's SEU environment, given any fixed belief \(\mu\), the actions \(a\) and \(b\) induce lotteries \(L^\mu_a\) and \(L^\mu_b\), pre-transformation, and \(\hat{L}^\mu_a\) and \(\hat{L}^\mu_b\), post-transformation. Thus, actions \(a\) and \(b\) being made steeper (Definition \ref{steepdef}) is equivalent to the induced lotteries satisfying the implication in Expression \ref{in2} and Inequality \ref{in3}, \textit{for any belief} \(\mu \in \Delta\).


\subsection{A Partial Order on Behavioral Predictions}

The work of \cite{chompchomp} also suggests the following application of Proposition \ref{binaryprop}. In monotone environments, we can define a binary relation on binary menus of ordinally-equivalent actions \(M_1 \coloneqq\left\{a^1,b^1\right\}\) and \(M_2 \coloneqq \left\{a^2,b^2\right\}\), with \(b^1 \triangleright a^1\) (and so by ordinal equivalence, \(b^2 \triangleright a^2\)), \(\underline{\blacktriangleright}\), where \(M_1 \underline{\blacktriangleright} M_2\) if the DM preferring \(a^1\) to \(b^1\) places greater restrictions on rational behavior by the DM than preferring \(a^2\) to \(b^2\). That is, \(M_1 \underline{\blacktriangleright} M_2\) means there is no additional predictive power to knowing \(a^2 \succeq b^2\) when an analyst knows \(a^1 \succeq b^1\) (as \(a^1 \succeq b^1\) \(\Rightarrow\) \(a^2 \succeq b^2\)).

\(M_1 \underline{\blacktriangleright} M_2\) is exactly reducing the DM's action; thus,
\begin{corollary}\label{corrchomp}
    \(M_1 \underline{\blacktriangleright} M_2\) if and only if \(M_2\) is obtained from \(M_1\) by making the actions in \(M_1\) steeper.
\end{corollary}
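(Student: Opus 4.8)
The plan is to read the corollary as a translation of Proposition \ref{binaryprop} into the language of the behavioral-predictions order, so that the only genuine work is to verify that the relation \(\underline{\blacktriangleright}\) coincides with ``reduces the DM's action.'' First I would set up the dictionary between the two framings: identify \(M_1\) with the pre-transformation menu \(\{a,b\}\) and \(M_2\) with the post-transformation menu \(\{\hat a,\hat b\}\), via the transformation \(a^1 \mapsto a^2 =: \hat a\) and \(b^1 \mapsto b^2 =: \hat b\). Because \(M_1\) and \(M_2\) are ordinally equivalent---both consist of increasing actions, both have the high action single-crossing the low one from below, and both induce the same state-by-state ranking between the two actions---this map satisfies both clauses of the Transformation definition (within-action monotonicity and between-action ordering are preserved); conversely, every order-preserving transformation of \(\{a,b\}\) yields a menu ordinally equivalent to it. Hence the correspondence between ``transformations of \(M_1\)'' and ``menus \(M_2\) ordinally equivalent to \(M_1\)'' is exact, and Definition \ref{steepdef} applied to this map is precisely the statement that \(M_2\) is obtained from \(M_1\) by making its actions steeper.

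The substantive step is to show that \(M_1 \underline{\blacktriangleright} M_2\) holds if and only if the associated transformation reduces the DM's action. I would fix \((\mu,u)\) and run the finite case analysis over the possible optimal sets \(A^*,\hat A^* \subseteq \{a,b\}\). Writing \(a \succeq b\) for \(a \in A^*\) and \(\hat a \succeq \hat b\) for \(a \in \hat A^*\), the strong-set-order requirement that \(A^*\) dominate \(\hat A^*\) unwinds to exactly two binding implications: if \(A^* = \{a\}\) (strict preference for the low action pre-transformation) then \(\hat A^* = \{a\}\), and if \(A^* = \{a,b\}\) (indifference) then \(a \in \hat A^*\). For instance, \(A^* = \{a\}\) together with \(b \in \hat A^*\) would, taking \(\tilde a = a\) and \(a' = b\) in the definition of the strong set order, force \(\max\{a,b\} = b \in A^*\), a contradiction; the remaining sub-cases are checked analogously and impose no further restriction. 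Aggregated over all \((\mu,u)\), these two implications say exactly that whenever \(M_1\) ranks its low action weakly (respectively strictly) above its high action, so does \(M_2\)---which is the content of \(M_1 \underline{\blacktriangleright} M_2\).

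With this equivalence established, the corollary is immediate: Proposition \ref{binaryprop} asserts that, with a binary menu, a transformation reduces the DM's action if and only if \(a\) and \(b\) are made steeper in the sense of Definition \ref{steepdef}. Chaining this with the equivalence of the previous paragraph and the dictionary of the first, we obtain \(M_1 \underline{\blacktriangleright} M_2\) if and only if \(M_2\) is obtained from \(M_1\) by making the actions in \(M_1\) steeper.

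I expect the main obstacle to be the case analysis of the second paragraph---specifically, translating strong-set-order domination correctly into preference implications and ensuring the strict-versus-weak edge cases line up with how \(\underline{\blacktriangleright}\) is read. A naive reading of ``\(a^1 \succeq b^1 \Rightarrow a^2 \succeq b^2\)'' captures only the weak half of the equivalence, whereas the strong set order also delivers the strict upgrade \(a^1 \succ b^1 \Rightarrow a^2 \succ b^2\); I would therefore be careful to take \(\underline{\blacktriangleright}\) to encompass both, so that the equivalence with ``reduces the DM's action'' is genuinely two-sided and no gap is left.
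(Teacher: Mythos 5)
Your proposal is correct and takes essentially the same route as the paper: the paper's entire argument is the one-line identification of \(M_1 \underline{\blacktriangleright} M_2\) with the transformation reducing the DM's action, followed by an appeal to Proposition \ref{binaryprop}. Your dictionary between ordinally-equivalent menus and order-preserving transformations, your unwinding of the strong set order into the weak and strict preference implications, and your care that \(\underline{\blacktriangleright}\) be read to encompass both merely make explicit the details the paper leaves implicit.
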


This implication of the proposition is worth discussing. In a natural sense, menu \(M_2\) is a riskier menu than \(M_1\): both \(a_2\) and \(b_2\) vary more in the state than \(a_1\) and \(b_1\). Moreover, provided \(a_2\) and \(b_2\) are not shifted vertically (in belief space) too much from \(a_1\) and \(b_1\), they are ``safer'' than \(a_1\) and \(b_1\), in the parlance of \cite{safety}; \textit{viz.,} \(a_1\) and \(b_1\) become more appealing than \(a_2\) and \(b_2\) as a DM is made more risk averse. Thus, the corollary tells us that preferring the less risky option in the less risky menu ``has more bite'' than preferring the less risky option in the riskier menu.

As we are in the SEU world, the preference of one action over another in a menu has a joint implication on the DM's possible beliefs and utilities. Preferring \(a_1\) to \(b_1\) is, therefore, a stronger statement than preferring \(a_2\) to \(b_2\) about the DM's pessimism--the \(a_i\)s yield higher monetary payoffs than the \(b_i\)s in the low states, and vice versa--and risk aversion.

\subsection{Beyond Binary: A Sufficient Condition}

We now turn our attention to general finite menus. For an arbitrary action \(a \in A\), let \(A^{a}_{\triangleright}\) denote the set of actions in \(A\) that are larger than it:
\[A^{a}_{\triangleright} \coloneqq \left\{b \in A \colon b \triangleright a\right\}\text{.}\] We then deduce a corollary of Proposition \ref{binaryprop}:
\begin{corollary}\label{corr1}
    A transformation reduces the DM's action if for all actions \(a \in A\) and for all actions \(b \in A^{a}_{\triangleright}\), \(a\) and \(b\) are made steeper.
\end{corollary}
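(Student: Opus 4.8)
The plan is to reduce the general finite-menu statement to the binary characterization in Proposition \ref{binaryprop}, exploiting that \(A\) is totally ordered by \(\triangleright\) and that optimality over all of \(A\) implies optimality in any binary sub-menu. Fix an arbitrary belief \(\mu \in \Delta\) and utility \(u \in \mathcal{U}\), abbreviate the pre- and post-transformation values by \(V(c) \coloneqq \mathbb{E}_\mu[u(c_\theta)]\) and \(\hat{V}(c) \coloneqq \mathbb{E}_\mu[u(\hat{c}_\theta)]\), and take any \(a \in A^*\) and \(a' \in \hat{A}^*\). Since no action state-wise dominates another and \(A\) is \(\triangleright\)-totally ordered, exactly one of the following holds: \(a' \triangleright a\) strictly, \(a \triangleright a'\) strictly, or \(a = a'\). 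I would split on these cases, with \(\max\) and \(\min\) understood with respect to \(\triangleright\).

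The cases \(a \triangleright a'\) and \(a = a'\) are immediate, since then \(\max\{a,a'\} = a \in A^*\) and \(\min\{a,a'\} = a' \in \hat{A}^*\) by assumption, so nothing further is needed. The substantive case is \(a' \triangleright a\), where I would invoke the binary result. Because \(a' \in A^{a}_{\triangleright}\), the hypothesis guarantees that \(a\) and \(a'\) are made steeper; as the transformation restricted to the sub-menu \(\{a,a'\}\) still satisfies the transformation axioms, Proposition \ref{binaryprop} applies and tells us that within \(\{a,a'\}\) the pre-transformation optimal set dominates the post-transformation optimal set in the strong set order.

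The linking observation is that \(a \in A^*\) forces \(V(a) \geq V(a')\), so \(a\) is optimal in \(\{a,a'\}\) before the transformation, while symmetrically \(a' \in \hat{A}^*\) forces \(\hat{V}(a') \geq \hat{V}(a)\), so \(a'\) is optimal in \(\{a,a'\}\) afterwards. Feeding these two binary optima into the binary strong-set-order domination (with \(a'\) on the post-side and \(a\) on the pre-side) yields that \(\max\{a',a\} = a'\) is binary-pre-optimal, i.e.\ \(V(a') \geq V(a)\), and that \(\min\{a',a\} = a\) is binary-post-optimal, i.e.\ \(\hat{V}(a) \geq \hat{V}(a')\). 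Combining \(V(a') \geq V(a)\) with \(V(a) \geq V(a')\) gives \(V(a) = V(a')\), hence \(a' \in A^*\) and \(\max\{a,a'\} = a' \in A^*\); combining \(\hat{V}(a) \geq \hat{V}(a')\) with \(\hat{V}(a') \geq \hat{V}(a)\) gives \(a \in \hat{A}^*\) and \(\min\{a,a'\} = a \in \hat{A}^*\). As \(\mu\) and \(u\) were arbitrary, \(A^*\) dominates \(\hat{A}^*\) in the strong set order for every \(\mu,u\), which is exactly the statement that the transformation reduces the DM's action.

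I expect the only real obstacle to be conceptual rather than computational: recognizing that an action optimal over all of \(A\) is automatically optimal in every binary sub-menu containing it, so Proposition \ref{binaryprop} can be applied pair-by-pair, and that the strong-set-order conclusion on each pair upgrades to the full menu through the value-equality argument above. It is also worth stating explicitly that the case \(a \triangleright a'\) is vacuous precisely because \(\max\{a,a'\} = a\) and \(\min\{a,a'\} = a'\) already lie in the required sets, so that the entire content of the corollary is carried by the single case \(a' \triangleright a\).
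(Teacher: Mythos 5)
Your proposal is correct and is essentially the argument the paper leaves implicit when it ``deduces'' Corollary \ref{corr1} from Proposition \ref{binaryprop}: restrict attention to the pair \(\{a,a'\}\) with \(a \in A^*\), \(a' \in \hat{A}^*\), and \(a' \triangleright a\), note that full-menu optimality implies binary-menu optimality, apply the binary strong-set-order conclusion, and upgrade back to the full menu via the resulting value equalities \(V(a)=V(a')\) and \(\hat{V}(a)=\hat{V}(a')\). The case analysis, the observation that the restricted transformation still satisfies the transformation axioms, and the lifting step are all sound, so there is nothing to correct.
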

To guarantee that a robustly lower action is taken, we need merely ask that a certain collection of pairwise improvements of low versus high actions manifest. It suffices to make every action steeper, with possible dominance improvements and deteriorations at especially low or high actions.

\subsection{Beyond Binary: A Necessary Condition}

For a pair of states \(\theta, \theta' \in \Theta\), we define the set of \textcolor{Rhodamine}{Relevant} actions \(A_{\theta,\theta'} \subseteq A\) as
\[A_{\theta,\theta'} \coloneqq \left\{a \in A \colon \nexists b \in A \colon b_{\theta} \geq a_{\theta} \quad \text{and} \quad b_{\theta'} \geq a_{\theta'}, \text{ with at least one inequality strict}\right\}\text{.}\]
\textit{Viz.,} these are the actions that are not weakly dominated by any other action in \(A\), except possibly by an exact duplicate, in the states \(\theta\) and \(\theta'\). Furthermore, without loss of generality, we assume that for any pair \(\theta, \theta' \in \Theta\) there are no distinct \(a, b \in A_{\theta,\theta'}\) with \(a_\theta = b_\theta\) and \(a_{\theta'} = b_{\theta'}\) (no duplicate actions). 


We label the actions in \(A_{\theta,\theta'}\) according to \(\triangleright\): 
\[A_{\theta,\theta'} = \left\{a^1, \dots, a^m\right\}\text{, where } a^1 \triangleleft \cdots \triangleleft a^m \text{.}\]
Observe that this implies
\[a^1_{\theta} > \dots > a^m_{\theta} \quad \text{and} \quad a^1_{\theta'} < \dots < a^m_{\theta'}\text{.}\]

For any \(\theta, \theta' \in \Theta\) and any \(A_{\theta,\theta'}\), we let \(\underline{a}^{\left(\theta,\theta'\right)}\) and \(\overline{a}^{\left(\theta,\theta'\right)}\) denote the minimal and maximal actions in \(A_{\theta,\theta'}\): 
\[\underline{a}^{\left(\theta,\theta'\right)} = a^1 \triangleleft \cdots \triangleleft a^m = \overline{a}^{\left(\theta,\theta'\right)} \text{.}\]
Naturally, they may not be distinct; \textit{viz.,} \(A_{\theta,\theta'}\) may be a singleton.

\begin{definition}
    We say that the decision problem \textcolor{Rhodamine}{Becomes Relevantly Steeper} if for any pair \(\theta, \theta' \in \Theta\), \(\hat{a}^i_\theta \leq a^i_\theta\) for all \(a^i \in A_{\theta,\theta'} \setminus \left\{\underline{a}^{\left(\theta,\theta'\right)}\right\}\) and \(\hat{a}^i_{\theta'} \geq a^i_{\theta'}\) for all \(a^i \in A_{\theta,\theta'} \setminus \left\{\overline{a}^{\left(\theta,\theta'\right)}\right\}\).
\end{definition}

\begin{theorem}\label{steeptheorem}
    A transformation reduces the DM's action only if the decision problem becomes relevantly steeper.
\end{theorem}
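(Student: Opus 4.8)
The plan is to argue by contraposition: assuming the decision problem does \emph{not} become relevantly steeper, I will exhibit a belief $\mu \in \Delta$ and a utility $u \in \mathcal{U}$ for which $A^*$ fails to dominate $\hat{A}^*$ in the strong set order. By the definition of relevant steepening, the failure takes one of two forms: there is a pair $\theta,\theta'$ and a relevant action $a^i \in A_{\theta,\theta'}$ with either (L) $i \geq 2$ and $\hat{a}^i_\theta > a^i_\theta$, or (H) $i \leq m-1$ and $\hat{a}^i_{\theta'} < a^i_{\theta'}$. These two cases are mirror images, but the natural mirror (reversing $\Theta$ and negating payoffs) turns a risk-averse DM into a risk-loving one, so I would treat them by parallel -- not identical -- constructions, and I describe (L) in detail.

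The key simplification is to place all belief mass on the two states $\theta,\theta'$, so that only the values $a_\theta,a_{\theta'}$ of each action matter. Two observations then make the menu tractable. First, any action weakly dominated in $\left(\theta,\theta'\right)$ is, for every $\mu$ with full support on $\{\theta,\theta'\}$ and every increasing $u$, weakly beaten by a relevant action and hence never a strict optimum; so only the relevant actions $a^1,\dots,a^m$ can be selected. Second, because a Transformation preserves the ranking of actions within each fixed state, the orders $a^1_\theta > \cdots > a^m_\theta$ and $a^1_{\theta'} < \cdots < a^m_{\theta'}$ -- and therefore the relevant set $A_{\theta,\theta'}$ together with its $\triangleright$-order -- coincide before and after. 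Restricted to $\{\theta,\theta'\}$, the choice among consecutive relevant actions is exactly the binary problem of $\S$\ref{binarymenu}, and hypothesis (L) is precisely the failure of condition (1) of Definition \ref{steepdef} ($\hat{b}_\theta \leq b_\theta$) for the pair $\{a^{i-1},a^i\}$ with $b = a^i$; Proposition \ref{binaryprop} then says a two-action menu of this shape cannot reduce the DM's action. The real work is to realize this reversal \emph{inside} the full menu.

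To do so I exploit that every low-state payoff is at most every high-state payoff: each action is increasing, so $a^j_\theta \leq a^j_{\theta'}$, and the extremal relevant actions give $\max_j a^j_\theta = a^1_\theta \leq a^1_{\theta'} = \min_j a^j_{\theta'}$, and likewise after the transformation. This lets me choose a continuous, strictly increasing, concave, piecewise-linear $u$ whose slope profile on the low-payoff range is designed almost independently of its profile on the high-payoff range. Concretely, I put a kink at $\hat{a}^i_\theta$, making $u$ steep below the kink and nearly flat above it, and I place belief heavily (but not entirely) on the low state $\theta$. After the transformation the low payoffs of all actions below $a^i$ lie strictly in the flat region, so their low-state advantage is neutralized while $a^i$ still beats them in the high state; and the steep region below $\hat{a}^i_\theta$ penalizes every action above $a^i$; hence $\hat{A}^* = \{a^i\}$. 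Before the transformation, by contrast, $a^i$'s low payoff $a^i_\theta$ sits strictly below the kink, so along the steep stretch $[a^i_\theta,\hat{a}^i_\theta]$ the action $a^{i-1}$ (and lower actions) regain a low-state advantage over $a^i$. Sending the flat slope to zero pushes the post-transformation indifference belief between $a^{i-1}$ and $a^i$ arbitrarily close to certainty on $\theta$, while the pre-transformation indifference belief stays bounded away from it; a single $\mu$ can then be slotted strictly between the two. The outcome is $A^* \subseteq \{a^1,\dots,a^{i-1}\}$ and $\hat{A}^* = \{a^i\}$, so for the pre-optimal $a^k \triangleleft a^i$ we get $\max\{a^k,a^i\} = a^i \notin A^*$: the strong set order fails, and the transformation does not reduce the DM's action.

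Case (H) I would handle symmetrically, concentrating belief on the high state $\theta'$ and kinking $u$ so as to exploit the \emph{decrease} $\hat{a}^i_{\theta'} < a^i_{\theta'}$ of the lower action of the pair $\{a^i,a^{i+1}\}$, which drives the post-transformation optimum up to $a^{i+1}$ while the pre-transformation optimum stays at or below $a^i$. The main obstacle, in both cases, is the simultaneous bookkeeping: I must suppress \emph{all} non-adjacent relevant actions -- not merely $a^{i-1}$ and $a^i$ -- both before and after the transformation with one concave utility, and verify that the lone local hypothesis genuinely opens a gap between the pre- and post-transformation indifference beliefs. The delicacy is that a Transformation is constrained only to preserve within-state orderings, so the pre- and post-payoff ranges may interleave (for instance $\hat{a}^i_\theta$ may exceed some pre-transformation high payoff); I must confirm that these finitely many marginal-utility comparisons can all be met by a single concave $u$, which is exactly where the separation of the low- from the high-payoff range, and the freedom to send the flat slope to zero, do the decisive work.
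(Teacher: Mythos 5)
Your architecture is the paper's: contraposition, beliefs supported on the offending pair $\{\theta,\theta'\}$, a single-kink piecewise-linear concave utility (slope $1$ below the kink, slope $\iota$ above), the limit $\iota \downarrow 0$, and a belief slotted between pre- and post-transformation indifference thresholds. But the execution has a genuine gap, and it is exactly the ``delicacy'' you flag and leave unresolved. The kink cannot sit at $\hat{a}^i_\theta$; the paper places it at $\min\{a^{i-1}_\theta, \hat{a}^i_\theta\}$ (in your labeling), and the min is the crux, not bookkeeping. A Transformation does not constrain pre- versus post-transformation payoff \emph{levels}, so with your kink the entire pre-transformation menu can lie below the kink, making the DM effectively risk neutral before the transformation --- and then no suitable belief need exist. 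Concretely, take $A=\{a,b\}$, $b \triangleright a$, with $a_\theta=2$, $b_\theta=0$, $a_{\theta'}=3$, $b_{\theta'}=10$, and $\hat{b}_\theta=100$, $\hat{a}_\theta=110$, $\hat{a}_{\theta'}=111$, $\hat{b}_{\theta'}=112$: all within-state and within-action orderings are preserved, and relevant steepness fails since $\hat{b}_\theta > b_\theta$. With your kink at $\hat{b}_\theta=100$, every pre-transformation payoff lies in the slope-$1$ region, so the pre-transformation indifference belief is the risk-neutral $2/9$, while the post-transformation indifference belief equals $10/11$ for every $\iota$; since $2/9 < 10/11$, there is no belief at which the pre-optimum is $a$ and the post-optimum is $b$, and this single utility exhibits no strong-set-order violation at any belief. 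The paper's kink at $\min\{a_\theta,\hat{b}_\theta\}=2$ repairs this: the pre-transformation threshold becomes $2/(2+7\iota) \to 1$ while the post threshold stays $10/11$. The min is what makes the paper's Properties 1--4 hold, i.e., it guarantees that every payoff the argument compares --- all high-state payoffs pre and post, the pre low-state payoffs of $a^1,\dots,a^{i-1}$, and the post low-state payoffs of $a^1,\dots,a^i$ --- lies weakly above the kink, while $a^i_\theta$ and the pre low-state payoffs of higher actions lie strictly below it.

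Second, your belief and convergence claims are reversed. With the corrected kink, it is the \emph{pre}-transformation indifference belief between $a^{i-1}$ and $a^i$ that is pushed toward certainty on the \emph{high} state $\theta'$ as $\iota \downarrow 0$ (pre-transformation, $a^i_\theta$ sits below the kink, so $a^{i-1}$'s low-state utility advantage is bounded away from zero while all high-state utility differences scale with $\iota$), whereas the post-transformation indifference beliefs among $a^1,\dots,a^i$ are the post risk-neutral ones: fixed, interior, and \emph{not} converging anywhere. The belief must therefore be placed heavily on $\theta'$ --- above every post risk-neutral threshold yet below the pre-transformation threshold --- not ``heavily on the low state $\theta$'' as you write. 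As stated, with mass concentrated on $\theta$, the post-transformation optimum is $a^1$ (at risk-neutral odds the lowest action wins when $\theta$ is likely), so $\hat{A}^* \neq \{a^i\}$ and your strong-set-order violation never materializes. Case (H) needs the mirror-image repair, which is the paper's second construction: a kink at the \emph{pre}-transformation value $a^i_{\theta'}$, with the compared payoffs now lying below the inflection point.
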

Please see Appendix \ref{steepproof} for the proof of this theorem. What this result means is that for every pair of states all but the lowest relevant action must become (weakly) worse in the low state, and all but the highest relevant action must become (weakly) better in the high state. The contrapositive of the result provides the intuition: if some non-minimal action (\(a^{i+1}\)) strictly improves in the low state, we can always find a utility function that makes this improvement improve the attractiveness of this action versus all relevant actions lower than it to a large degree. The strict increase means there is a gap between the old value and the new value, so we construct a utility function that is kinked in that gap. As this downward drop in slope becomes increasingly severe, the action immediately below \(a^{i+1}\), \(a^{i}\), becomes increasingly appealing versus \textit{every} action higher than it (including \(a^{i+1}\)); so much so that the DM can be arbitrarily optimistic that the state is \(\theta'\) yet still prefer \(a^i\) to all those higher.

Moreover, the indifference beliefs between \(a^i\) and each \textit{lower} action, pre- and post-transformation, are precisely the risk-neutral indifference beliefs as these monetary rewards all lie above the kink--on which region the constructed utility function is linear--and these beliefs are all strictly bounded away from \(1\). Crucially, the indifference belief between \(i\) and \(i+1\), post-transformation, is also strictly bounded away from \(1\), as it is just the risk-neutral indifference belief. Thus, we can find a sufficiently optimistic belief such that \(a^i\) is optimal pre-transformation, yet strictly inferior to \(a^{i+1}\) post-transformation.

The same essential argument works for establishing the necessity that all but the highest relevant action must become (weakly) better in the high state for a robust decrease in the DM's action. If not, there is some gap we can exploit with a carefully chosen utility function. Finally, it is important to note that the structure placed on the environment, as well as the order-preserving aspect of the transformation, are both used in the proof. It is these features that allow us to construct the single-kinked utility function that is so helpful.

\section{Beyond Expected Utility}\label{beyondeu}

Slightly stronger conditions ensure a robustly lower choice of action for the following abstract setting. Importantly, as we will see, the monotone environment of this paper plays no role here, except to orient the transformed actions as ``steeper'' than the originals. We use a similar formulation of preferences to \cite*{rigotti2008subjective}.\footnote{\cite*{renou2014ordients} conduct comparative statics in a similarly abstract setting, focusing on a generalized first-order approach.} We now understand each action \(a \in A\) to be an element of \(\mathbb{R}^{\Theta}\), and let \(a \gg b\) mean that \(a_\theta > b_\theta\) for all \(\theta \in \Theta\). We impose that the DM has a preference relation over actions \(\succeq\) that is
\begin{enumerate}
    \item \textbf{Strongly Monotone:} If \(a \gg b\) then \(a \succ b\);
    \item \textbf{Convex:} If \(a \succeq b\) then for all \(\lambda \in [0,1]\), \(\lambda a + (1-\lambda) b \succeq b\);
    \item \textbf{Complete:} For all \(a, b \in \mathbb{R}^{\Theta}\), \(a \succeq b\) or \(b \succeq a\); and 
    \item \textbf{Transitive:} If \(a \succeq b\) and \(b \succeq c\) then \(a \succeq c\).
\end{enumerate}
We call such preferences over actions \textcolor{Rhodamine}{Regular}. If \(a \gg b\), we say that \(a\) \textcolor{Rhodamine}{Dominates} \(b\). We term an action \(a^{\lambda} = \lambda a + (1-\lambda) c \succeq b\) (for some \(\lambda \in [0,1]\)) a \textcolor{Rhodamine}{Mixture} of actions \(a\) and \(c\). Observe that strong monotonicity yields that \(a\) dominating \(b\) implies that \(a \succ b\).
\begin{definition}
    Fix actions \(a\) and \(b\) with \(b \triangleright a\). We say that they are \textcolor{Rhodamine}{Made Commonly Steeper} if
    \begin{enumerate}
        \item \(b\) dominates a mixture of \(a\) and \(\hat{b}\) or \(b = \hat{b}\); and
        \item \(\hat{a}\) dominates a mixture of \(a\) and \(\hat{b}\) or \(\hat{a} = a\).
    \end{enumerate} \end{definition}

Deferring its proof to Appendix \ref{regularpropproof}, we have
\begin{proposition}\label{regularprop}
Suppose a DM's preferences over actions are regular. Then, a transformation reduces the DM's action if \(a\) and \(b\) are made commonly steeper.
\end{proposition}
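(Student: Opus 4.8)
The plan is to reduce the strong-set-order requirement in the binary menu to a single implication about $\succeq$, and then to establish that implication by chaining the four regularity axioms. Since the menu is $\{a,b\}$ with $a \triangleleft b$, the instruction ``reduces the DM's action'' amounts to excluding, for every regular $\succeq$, the three ``upward'' configurations $(A^*, \hat{A}^*) \in \{(\{a\},\{b\}),\ (\{a\},\{a,b\}),\ (\{a,b\},\{b\})\}$. A short case check shows these are excluded exactly when, for every regular $\succeq$, $a \succeq b$ implies $\hat{a} \succeq \hat{b}$, with the conclusion strict whenever the transformation is nontrivial. So I would fix an arbitrary regular $\succeq$, assume $a \succeq b$, and prove $\hat{a} \succeq \hat{b}$ (strictly when $\hat{a} \neq a$ or $\hat{b} \neq b$).

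First I would show $a \succeq \hat{b}$. If $b = \hat{b}$ this is immediate from $a \succeq b$. Otherwise condition $1$ of ``made commonly steeper'' supplies a convex combination $m_1 = \lambda a + (1-\lambda)\hat{b}$ with $b \gg m_1$; strong monotonicity gives $b \succ m_1$, and transitivity with $a \succeq b$ gives $a \succ m_1$. By completeness, either $a \succeq \hat{b}$ (and we are done) or $\hat{b} \succ a$; in the latter case convexity applied to $\hat{b} \succeq a$ yields $(1-\lambda)\hat{b} + \lambda a = m_1 \succeq a$, contradicting $a \succ m_1$. Hence $a \succ \hat{b}$, so in particular $a \succeq \hat{b}$.

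Next I would upgrade this to $\hat{a} \succeq \hat{b}$. If $\hat{a} = a$, then $\hat{a} = a \succeq \hat{b}$ directly, and strictly whenever $b \neq \hat{b}$ by the previous paragraph. Otherwise condition $2$ supplies a convex combination $m_2 = \mu a + (1-\mu)\hat{b}$ with $\hat{a} \gg m_2$, so strong monotonicity gives $\hat{a} \succ m_2$; and since $a \succeq \hat{b}$, convexity (with $\mu$ weight on $a$) gives $m_2 \succeq \hat{b}$, whence $\hat{a} \succ \hat{b}$ by transitivity. Collecting cases, whenever the transformation is nontrivial the hypothesis $a \succeq b$ forces the strict conclusion $\hat{a} \succ \hat{b}$, while the trivial transformation $\hat{a}=a,\ \hat{b}=b$ handles itself; together these deliver both required implications and hence the strong set order for every regular $\succeq$.

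The hard part will not be the algebra but the bookkeeping. I must invoke convexity in the correct direction each time (from $x \succeq y$ to $\lambda x + (1-\lambda)y \succeq y$, reading the mixture weights carefully so that $m_1$ and $m_2$ match the axiom), and I must track the degenerate ``or'' branches ($b = \hat{b}$, $\hat{a} = a$) together with the indifference ties so that the genuine strong set order, rather than merely the weak implication $a \succeq b \Rightarrow \hat{a} \succeq \hat{b}$, is established. The use of completeness is also essential and easy to overlook: it is precisely what licenses the split into $a \succeq \hat{b}$ versus $\hat{b} \succ a$ that drives the contradiction in the first step. Notably, the monotone single-crossing structure of the rest of the paper plays no role here beyond having oriented $\hat{a}$ and $\hat{b}$ as the ``steeper'' versions of $a$ and $b$.
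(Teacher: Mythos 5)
Your proof is correct and follows essentially the same route as the paper's: strong monotonicity applied to the dominated mixture, convexity of \(\succeq\) to generate the contradiction that yields \(a \succ \hat{b}\), and transitivity to chain \(\hat{a} \succ \gamma a + (1-\gamma)\hat{b} \succeq \hat{b}\). The only differences are cosmetic: you make the strong-set-order reduction explicit and merge the paper's two cases (\(a \succ b\) and \(a \sim b\)) into a single \(a \succeq b\) argument with strictness bookkeeping, which is fine (though your ``exactly when'' is a slight overstatement---your strict-conclusion condition is sufficient, not necessary, for excluding the three upward configurations).
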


Recall that for an arbitrary action \(a \in A\), \(A^{a}_{\triangleright} = \left\{b \in A \colon b \triangleright a\right\}\). Then,
\begin{corollary}\label{corr43}
    Suppose a DM's preferences over actions are regular. A transformation reduces the DM's action if for any action \(a \in A\) and \(b \in A^{a}_{\triangleright}\), \(a\) and \(b\) are made commonly steeper.
\end{corollary}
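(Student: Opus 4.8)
The plan is to lift the binary Proposition \ref{regularprop} to the finite menu pairwise, using that \(\triangleright\) totally orders \(A\). Fix an arbitrary regular preference \(\succeq\) and let \(A^*\) and \(\hat{A}^*\) be the optimal sets in the full menu \(A\) before and after the transformation. To show \(A^*\) dominates \(\hat{A}^*\) in the strong set order, I would take arbitrary \(a \in A^*\) and \(a' \in \hat{A}^*\) and verify \(\max\{a,a'\} \in A^*\) and \(\min\{a,a'\} \in \hat{A}^*\). Since \(\triangleright\) is a total order, exactly one of \(a = a'\), \(a \triangleright a'\), or \(a' \triangleright a\) holds. If \(a = a'\) or \(a \triangleright a'\), the conclusion is immediate: then \(\max\{a,a'\} = a \in A^*\) and \(\min\{a,a'\} = a' \in \hat{A}^*\). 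This corresponds to the action (weakly) falling, which is always compatible with reduction.

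The substantive case is \(a' \triangleright a\), so that \(\max\{a,a'\} = a'\) and \(\min\{a,a'\} = a\). First I would pass to the binary submenu \(\{a,a'\}\). Because \(a' \in A^{a}_{\triangleright}\), the hypothesis makes \(a\) and \(a'\) commonly steeper, so Proposition \ref{regularprop} applies to this submenu: for the preference \(\succeq\), the submenu optimal sets \(A^*_{\{a,a'\}}\) (before) and \(\hat{A}^*_{\{a,a'\}}\) (after) satisfy strong set order dominance. Next I would record the two membership facts linking the full menu to the submenu: \(a \in A^*\) gives \(a \succeq a'\), hence \(a \in A^*_{\{a,a'\}}\); and \(a' \in \hat{A}^*\) gives \(\hat{a}' \succeq \hat{a}\), hence \(a' \in \hat{A}^*_{\{a,a'\}}\). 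Applying the binary dominance to the pair \((a,a') \in A^*_{\{a,a'\}} \times \hat{A}^*_{\{a,a'\}}\) then yields \(a' = \max\{a,a'\} \in A^*_{\{a,a'\}}\) and \(a = \min\{a,a'\} \in \hat{A}^*_{\{a,a'\}}\). The former says \(a' \succeq a\) before, which together with \(a \succeq c\) for every \(c \in A\) (as \(a \in A^*\)) gives \(a' \succeq c\) for every \(c\) by transitivity, so \(a' \in A^*\). The latter says \(\hat{a} \succeq \hat{a}'\) after, which together with \(\hat{a}' \succeq \hat{c}\) for every \(c\) (as \(a' \in \hat{A}^*\)) gives \(\hat{a} \succeq \hat{c}\) for every \(c\), so \(a \in \hat{A}^*\). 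Thus \(\max\{a,a'\} \in A^*\) and \(\min\{a,a'\} \in \hat{A}^*\), which is exactly the required strong set order dominance.

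I expect the only delicate point to be the bookkeeping that transfers optimality between the submenu and the full menu — specifically the two transitivity steps that promote submenu optimality of \(a'\) (before) and \(a\) (after) to full-menu optimality. These rely on completeness and transitivity from regularity, but require no fresh appeal to convexity or strong monotonicity: all of that structure is already packaged inside Proposition \ref{regularprop}. Everything else is a direct pairwise application of the binary result, made possible by the fact that each relevant pair \(\{a, a'\}\) with \(a' \triangleright a\) is assumed commonly steeper.
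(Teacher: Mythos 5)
Your proof is correct and follows essentially the same route as the paper, which states Corollary \ref{corr43} as an immediate pairwise consequence of Proposition \ref{regularprop}: you reduce each comparison to the binary submenu \(\{a,a'\}\), invoke the proposition there, and promote submenu optimality to full-menu optimality via completeness and transitivity. Your write-up simply makes explicit the bookkeeping the paper leaves implicit, and all of it checks out.
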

Naturally, if the DM is an expected utility (EU) maximizer, with some prior \(\mu \in \Delta\) and a strictly increasing, concave utility function \(u\), her preferences over actions are regular (as defined above).
Here are some other examples of non-EU preferences that are regular (verified in Appendix \ref{regsec}):

\bigskip

\noindent \textbf{Variational preferences.} Following \cite*{maccheroni2006ambiguity}, suppose a DM solves \[\max_{a \in A} \min_{p \in \Delta}\left\{\mathbb{E}_p[u(a_\theta)] + c(p)\right\}\text{,}\]
where \(c \colon \Delta \to \mathbb{R}_{+}\) is a convex function and \(u\) is concave and strictly increasing.

This class of preferences includes as a special case the \textbf{Multiplier preferences} of \cite*{hansen2001robust}. With those, given some \(q \in \Delta\), a DM solves
\[\max_{a \in A} \min_{p \in \Delta}\left\{\mathbb{E}_p[u(a_\theta)] + C_\varphi(p \parallel q)\right\}\text{,}\]
where \(C_\varphi\) is a prior-dependent distortion, defined as
\[
C_\varphi(p \parallel q) = \int_\Theta \varphi\left(\frac{p(\theta)}{q(\theta)}\right) q(\theta) \, d\theta\text{,}
\]
for \(\varphi\)-divergence; and \(u\) is strictly increasing and weakly concave.

\bigskip

\noindent \textbf{Smooth ambiguity.} Following \cite*{klibanoff2005smooth}, a DM evaluates an action \(a\) according to 
\[\int_{\Delta} \varphi \left(\int_{\Theta} u(a(\theta))d\mu(\theta)\right)d\pi\text{,}\]
where \(u\) and \(\varphi\) are strictly increasing and \(u\) and \(\varphi\) are concave.

\bigskip

If there exists a continuous, increasing, and concave \(U\) that represents \(\succeq\), the sufficient conditions for a reduction in the DM's action can be weakened slightly. We call such preferences over actions \textcolor{Rhodamine}{Strongly Convex}.
\begin{definition}\label{mwcs}
    Fix actions \(a\) and \(b\) with \(b \triangleright a\). We say that they are \textcolor{Rhodamine}{Made Weakly Commonly Steeper} if \(b \geq \lambda a + (1-\lambda) \hat{b}\) for some \(\lambda \in [0,1)\) and \(\hat{a} \geq \gamma a + (1-\gamma) \hat{b}\) for some \(\gamma \in [0,1]\). \end{definition}
\begin{proposition}
    Suppose a DM's preferences over actions are strongly convex. A transformation reduces the DM's action if for any action \(a \in A\) and \(b \in A^{a}_{\triangleright}\), \(a\) and \(b\) are made weakly commonly steeper.
\end{proposition}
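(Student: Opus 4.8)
The plan is to peel the statement back to the same two pairwise comparisons that drive Proposition~\ref{regularprop}, and then to verify those comparisons using the \emph{concave} representation $U$ rather than convexity of $\succeq$ alone. First I would observe that a continuous, increasing, concave $U$ is in particular quasiconcave, so strongly convex preferences are regular; consequently the step in Appendix~\ref{regularpropproof} that converts pairwise rankings into the strong set order carries over unchanged. Concretely, since $A$ is totally ordered by $\triangleright$ and $\succeq$ is complete, $A^*$ dominates $\hat{A}^*$ in the strong set order for every such DM once one shows, for each pair $a \triangleleft b$, the two implications \((\mathrm{i})\ a \succeq b \Rightarrow \hat{a} \succeq \hat{b}\) and \((\mathrm{ii})\ a \succ b \Rightarrow \hat{a} \succ \hat{b}\). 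So it suffices to establish (i) and (ii) whenever $a$ and $b$ are made weakly commonly steeper, and this is exactly where concavity (not mere quasiconcavity) earns its keep: it lets Jensen's inequality turn the state-wise dominances of Definition~\ref{mwcs} into inequalities among utilities.

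Writing $\alpha = U(a)$, $\beta = U(b)$, $\hat{\alpha}=U(\hat{a})$, $\hat{\beta}=U(\hat{b})$, the fact that $U$ is increasing and concave turns Definition~\ref{mwcs} into $\beta \geq \lambda \alpha + (1-\lambda)\hat{\beta}$ for some $\lambda \in [0,1)$ and $\hat{\alpha} \geq \gamma \alpha + (1-\gamma)\hat{\beta}$ for some $\gamma \in [0,1]$. For the weak implication (i), I would combine $\alpha \geq \beta$ with the first inequality: since $\lambda \alpha \geq \lambda \beta$, we get $\beta \geq \lambda \beta + (1-\lambda)\hat{\beta}$, and because $\lambda < 1$ this gives $\beta \geq \hat{\beta}$, hence $\alpha \geq \hat{\beta}$; feeding $\alpha \geq \hat{\beta}$ into the second inequality yields $\hat{\alpha} \geq \gamma\hat{\beta}+(1-\gamma)\hat{\beta}=\hat{\beta}$, i.e.\ $\hat{a} \succeq \hat{b}$. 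Note that this half goes through for every admissible $\lambda$ and $\gamma$, using only $\lambda<1$.

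The main obstacle is the strict implication (ii). Starting from $\alpha > \beta$, the same manipulation of the first inequality (again using $\lambda<1$) now yields the \emph{strict} bound $\alpha > \hat{\beta}$; the second inequality then gives $\hat{\alpha} \geq \gamma\alpha + (1-\gamma)\hat{\beta} > \hat{\beta}$ \emph{provided} $\gamma > 0$. The delicate point, and the one I expect to require the most care, is excluding $\gamma = 0$. Were condition~2 satisfied with $\gamma = 0$, it would force $\hat{a} \geq \hat{b}$ state-wise; but this is impossible, because the order-preserving transformation preserves the strict inequality $b_{\theta'} > a_{\theta'}$ as $\hat{b}_{\theta'} > \hat{a}_{\theta'}$ on the (nonempty) set of high states, so the single crossing $b \triangleright a$ survives and neither $\hat{a}$ nor $\hat{b}$ dominates the other. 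Hence any $\gamma$ admissible in Definition~\ref{mwcs} must be strictly positive, and (ii) follows. Having established (i) and (ii) for every $\triangleright$-consecutive (indeed every) pair, the strong set order conclusion—and thus that the transformation reduces the DM's action—follows from the reduction of the first paragraph, exactly as in Corollary~\ref{corr43}.
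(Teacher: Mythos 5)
Your proposal is correct and follows essentially the same route as the paper: monotonicity of \(U\) plus concavity (Jensen on two-point mixtures) turn the two pointwise conditions of Definition \ref{mwcs} into the chains \(U(b) \geq U(\lambda a + (1-\lambda)\hat{b}) \geq \lambda U(a) + (1-\lambda)U(\hat{b})\) and \(U(\hat{a}) \geq \gamma U(a) + (1-\gamma)U(\hat{b})\), from which the weak and strict pairwise implications—and hence the strong set order conclusion—follow exactly as in the paper. The one place you go beyond the paper, ruling out \(\gamma = 0\) in the strict case by noting that \(\hat{a} \geq \hat{b}\) state-wise would contradict the order-preserving property of transformations together with the no-dominance stipulation, patches a step the paper's own proof glosses over (its final strict inequality \(\gamma U(a) + (1-\gamma)U(\hat{b}) > U(\hat{b})\) tacitly requires \(\gamma > 0\)), so your write-up is, if anything, slightly more careful.
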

\begin{proof}
    If \(a \succ \ (\succeq) \ b\), \[U(a) \underset{\geq}{>} U(b) \geq U(\lambda a + (1-\lambda) \hat{b}) \geq \lambda U(a) + (1-\lambda) U\left(\hat{b}\right)\text{,}\]
so \(a \succ \ (\succeq) \ \hat{b}\). Then,
\[U(\hat{a}) \geq U(\gamma a + (1-\gamma)\hat{b}) \geq \gamma U(a) + (1-\gamma) U(\hat{b}) \underset{\geq}{>} U(\hat{b})\text{,}\]
as desired. \end{proof}

\begin{figure}
    \centering
    \includegraphics[width=1\linewidth]{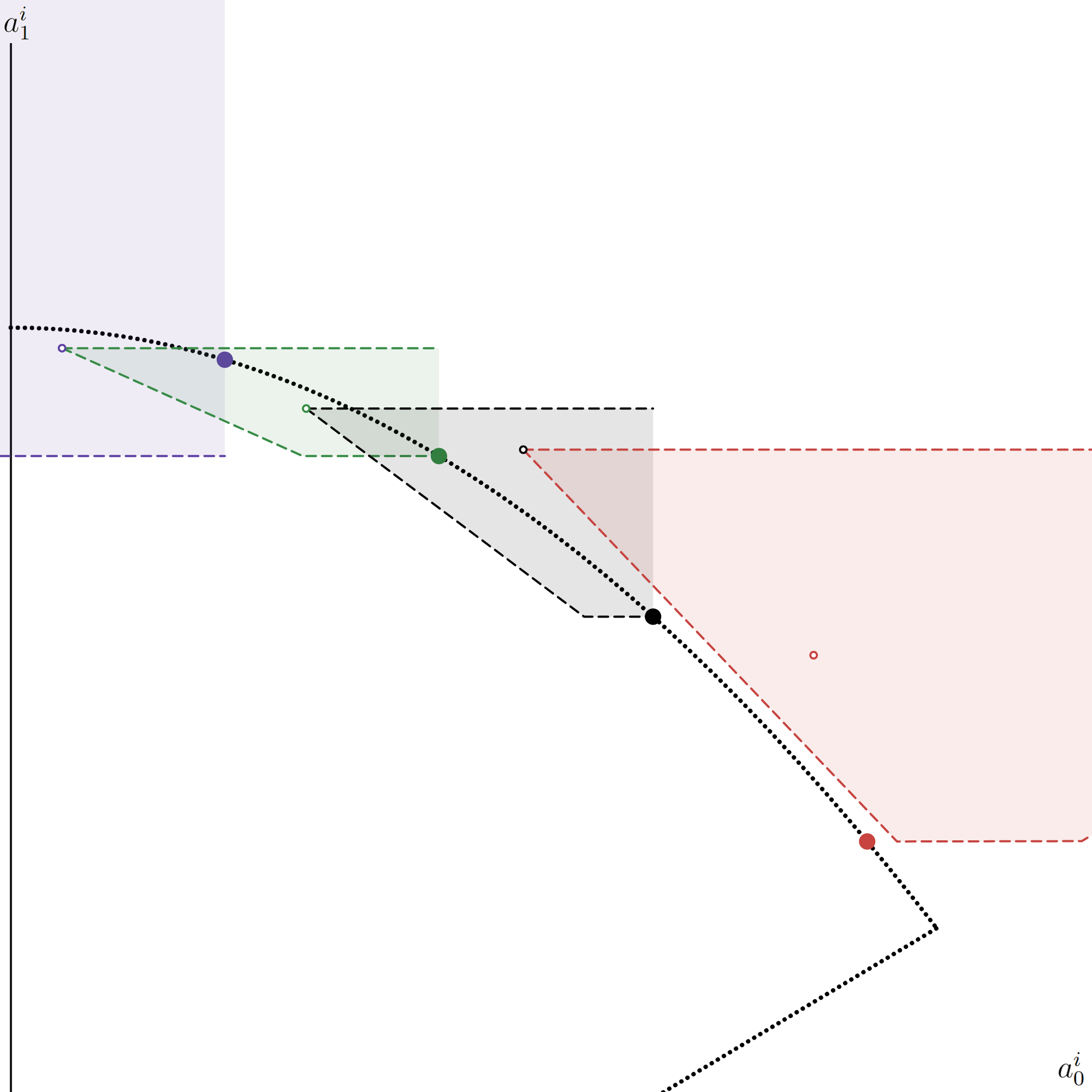}
    \caption{Sufficient Conditions with Expected Utility}
    \label{fig1}
\end{figure}

\begin{figure}
    \centering
    \includegraphics[width=1\linewidth]{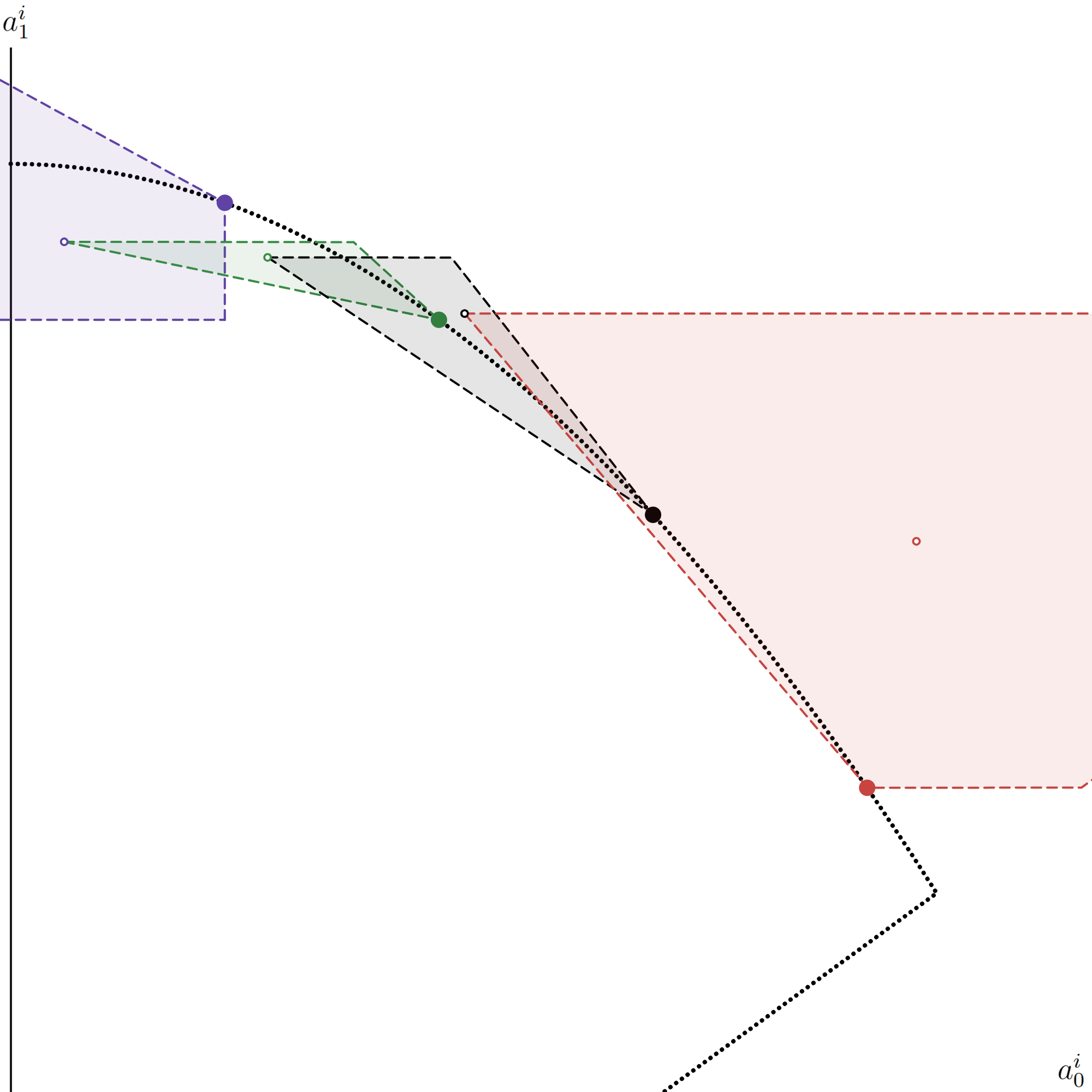}
    \caption{Sufficient Conditions with Regular Preferences}
    \label{fig2}
\end{figure}

\section{Binary Illustrations}

Before concluding the paper with applications, let us briefly scrutinize two binary environments; one in which there are two states and multiple actions, and one in which the state space is the unit interval but there are only two actions.

\subsection{Two States}\label{sectwostates}

Let \(\Theta = \left\{0,1\right\}\) and enumerate the set of actions \(A = \left\{a^1, \dots, a^m\right\} \subset \mathbb{R}\), with \(a^1 \triangleleft \cdots \triangleleft a^m\), 
so that \(a^i_0\) and \(a^i_1\) satisfy
\[a^1_{0} > \dots > a^m_{0} \quad \text{and} \quad a^1_{1} < \dots < a^m_{1}\text{.}\]
Recall also that for each \(i \in \left\{1,\dots,m\right\}\), \(a^{i}_0 \leq a^{i}_1\), which means that each point \(\left(a^1_{0}, a^1_{1}\right)\) lies on family of strictly decreasing curves in the first quadrant. We specify further that there is a curve on which these points lie that is strictly concave, which is equivalent to no action being weakly dominated for a risk-neutral agent.

The non-EU environment is simpler at first glance: the concavity of the curve on which the points lie ensure that the sufficiency conditions with respect to adjacent points imply the global conditions. However, non-adjacent actions do not affect each other in the EU setting as well.\footnote{Please visit Appendix \ref{binaryenviron} for a formal result and its proof.}

Figures \ref{fig1} and \ref{fig2} depict a four-action scenario. In both, the four solid points, in red, black, green, and purple, are the vectors of payoffs to the actions \(a^1\), \(a^2\), \(a^3\), and \(a^4\); where the \(x\)-coordinate is the monetary payoff to the action in state \(0\) and the \(y\)-coordinate is the payoff in state \(1\). Figure \ref{fig2} illustrates the environment of \(\S\)\ref{beyondeu}, whereas Figure \ref{fig1} concerns the EU setting.

In each figure, the sufficiency conditions of Corollaries \ref{corr1} and \ref{corr43} correspond to the colored regions: each point can be moved anywhere within the shaded region of the same color, with the hollowed\(+\)haloed points being specific new points. The shaded regions depend on where the new points are. Try it yourself (\href{https://www.desmos.com/calculator/sn0flbo8fb}{Link to EU example} and \href{https://www.desmos.com/calculator/qhnhhqpc6m}{Link to regular example}) by moving the points around!

\subsection{Two Actions}

\begin{figure}
    \centering
    \includegraphics[width=1\linewidth]{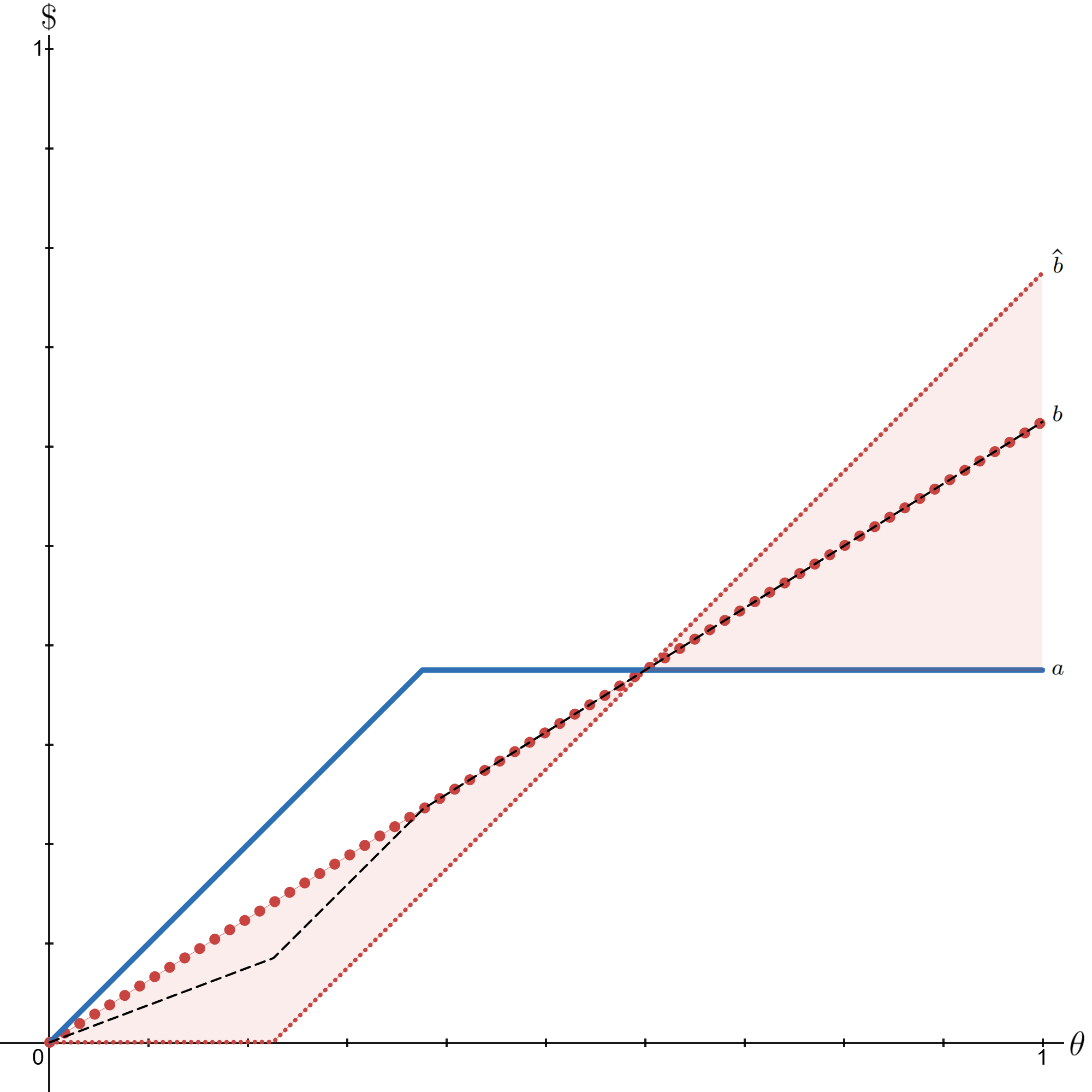}
    \caption{Reducing the DM's Security Choice}
    \label{fig3}
\end{figure}

Now let \(\Theta = \left[0,1\right]\) and \(A = \left\{a,b\right\}\), with \(b \triangleright a\). The monetary payoffs to actions \(a\) and \(b\) are the increasing functions (in \(\theta\)) \(a,b \colon \left[0,1\right] \to \mathbb{R}\). For simplicity, we assume that there exists some unique indifference state \(\theta^{\dagger}\) such that \(a_\theta > \ (<) \ b_\theta\) for all \(\theta < \ (>) \ \theta^{\dagger}\). Note that with some additional extra structure--refer to, e.g., \cite*{nachman1994optimal} and \cite*{demarzo2005bidding}--\(a\) and \(b\) can be understood as securities, with the state being an asset's random cash flow.\footnote{Securities do not quite fit the assumptions of this paper as they do not satisfy single-crossing differences, the monetary rewards being equal at \(0\). However, this is inconsequential; we can still appeal to our earlier results for securities that satisfy single-crossing differences on \(\left(0,1\right]\).} 

Suppose the payoffs to \(a\) and \(b\) are transformed. When will the lower action, \(a\), become no less attractive as a result? Our first observation is that if \(a\) and \(b\) are continuous at \(\theta^{\dagger}\) pre- and post-transformation, if \(a\) becomes no less attractive (the transformation reduces the DM's action), it must be that \(a_{\theta^{\dagger}} = b_{\theta^{\dagger}} = \hat{a}_{\theta^{\dagger}} = \hat{b}_{\theta^{\dagger}}\).
This is because the order-preserving property of the transformation means that \(\hat{a}_{\theta^{\dagger}} = \hat{b}_{\theta^{\dagger}}\); so continuity tells us that if they are shifted up or down, \(a\) or \(b\) are not made steeper, which is necessary for a reduction in the DM's action.

Second, if the transformation reduces the DM's action, Proposition \ref{binaryprop} also tells us that two pointwise-dominance shifts must occur: \(\hat{b}_\theta \leq b_\theta\) for all \(\theta < \theta^{\dagger}\) and \(\hat{a}_{\theta'} \geq a_{\theta'}\) for all \(\theta' > \theta^{\dagger}\). Third, a super-actuarial improvement is, in this context,
\[\frac{\min\left\{\hat{a}_{\theta}, a_{\theta}\right\} - \hat{b}_{\theta}}{a_{\theta} - b_{\theta}} \geq \frac{\hat{b}_{\theta'} - \hat{a}_{\theta'}}{\min\left\{\hat{b}_{\theta'}, b_{\theta'}\right\}-a_{\theta'}}\text{,}\]
for all \(\theta' > \theta^{\dagger} > \theta\). Of course, these are also sufficient for the transformation to reduce the DM's action.

Figure \ref{fig3} depicts the case where the actions are securities, with \(a\) being debt and \(b\) equity. The corresponding curves pre-transformation, \(a_\theta\) and \(b_\theta\),\footnote{This is an abuse of notation, we mean the curves \(a\colon\Theta \to \mathbb{R}\) and \(b\colon\Theta \to \mathbb{R}\)--recall \(a_\theta \equiv a\left(\theta\right)\).} are in solid blue and big red dots. In a sense, for the transformation to reduce the DM's action, \(b\) must become more like a call option. Specifically, with the additional structure (standard in the security-design literature) of joint limited liability and a slope of lying in the unit interval, the shaded red region is the area in which \(\hat{b}_\theta\) (the curve corresponding to \(b\) post-transformation) must lie. 

The curve in small red dots is the case in which \(b\) is a call option following the transformation. \(b_\theta\) lies weakly above a convex combination of \(a_\theta\) and \(\hat{b}_{\theta}\), and so the first half of Definition \ref{mwcs} is satisfied. This convex combination is the dashed black line. Thus, if the second half of Definition \ref{mwcs} holds as well--\(\hat{a}_\theta\) lies weakly above some convex combination of \(a_\theta\) and \(\hat{b}_{\theta}\)--the transformation reduces the DM's action for any DM with strongly convex preferences over actions. Notably, we see that if a DM prefers debt to equity, she prefers equity to the call option. More generally, transformations that lower the DM's action are those in which the securities become more convex in a vague sense.

Alternatively, taking the perspective of Corollary \ref{corrchomp}, we see that an investor preferring debt to equity tells us more about her risk-preferences and beliefs than her preferring equity to a call option. Eschewing the steeper option tells us less--has less predictive power--the steeper a menu is.

\section{Further Applications and Discussion}

\subsection{Investing in a Risky Asset}\label{invest} An investor with wealth normalized to \(0\) is choosing how much to allocate between a riskless asset with a deterministic gross return also normalized to \(0\) and a random gross return of \(R\), which is, therefore, a real-valued random variable. We further assume that it is supported on a closed interval in whose interior \(0\) lies. 

The investor allocates \(\rho \in \mathbb{R}\) to the risky asset, which yields a stochastic terminal wealth of \(\rho R\). Naturally, the realized value of \(R\), \(r\), is the state. By assumption, \(r\) takes strictly positive and negative values and so no action dominates another. It is easy to see that this decision problem is monotone.

Now let us modify the payoff to the risky asset by assuming that there is a strictly increasing, bounded function \(\sigma \colon \mathbb{R} \to \mathbb{R}\) with \(\sigma (0) = 0\), \(\sigma(r) < r\) for all \(r < 0\) and \(\sigma(r) > 0\) for all \(r > 0\). Letting \(r_+\) and \(r_{-}\) denote strictly positive and negative states, we see that Inequality \ref{in1} is equivalent to, for all \(r_{+}\) and \(r_{-}\) and \(\rho_2 > \rho_1\),
\[\frac{\rho_1 \sigma(r_{-}) - \rho_2 \sigma(r_{-})}{\rho_1 r_{-} - \rho_2 r_{-}} \geq \frac{\rho_2 \sigma(r_{+}) - \rho_1 \sigma(r_{+})}{\rho_2 r_{+} - \rho_1 r_{+}} \quad \Leftrightarrow \quad \frac{\sigma(r_{-})}{r_{-}} \geq \frac{\sigma(r_{+})}{r_{+}}\text{.}\]
Then,
\begin{remark}
    For an SEU DM, demand in the risky asset is not increasing if \(\frac{\sigma(r_{-})}{r_{-}} \geq \frac{\sigma(r_{+})}{r_{+}}\) for all \(r_{+} > 0\) and \(r_{-} < 0\).
\end{remark}
One example of such a transformation is for all positive realizations of \(R\) to be scaled by a common \(k > 1\) and all negative realizations by a common \(s > k\).

\subsection{Insurance}

A DM with wealth normalized to \(0\) faces a potential loss of \(L > 0\). She can choose any level of insurance coverage \(\iota\) from a discrete subset of \(\left[0,L\right]\), containing both \(0\) and \(L\), at a price \( p \in \left(0,1\right)\). Let \(\Theta = \left\{0,1\right\}\) be the state space, with \(1\) the ``high'' (no loss) state. Thus, \[a^{\iota}_1 = - p \iota, \quad \text{and} \quad a^{\iota}_0 = - L + \iota (1-p)\text{.}\]

Let us modify the price to \(\hat{p}\) and loss to \(\hat{L} > 0\). How does this affect the DM's purchase? After some algebra, the conditions for Theorem \ref{steeptheorem} simplify to \(\hat{p} \leq p\) and \(L \leq \hat{L}\). Actuarial fairness simplifies to \(\hat{p} \leq p\) and so we conclude that
\begin{remark}
    Demand for insurance must increase if and only if both the per-unit price decreases and the loss amount increases.
\end{remark}
Notably, no amount of a price decrease can make up for a strictly decreased loss amount. Likewise, no increase in the loss amount increase can make up for a strict price increase.

\subsection{Cooperation in the Infinitely Repeated Prisoners' Dilemma}

\begin{figure}
\centering
  \begin{subfigure}{7cm}
\centering
\LARGE
\begin{game}{2}{2}[P1][P2] 
    & \(c\) & \(d\) \\
\(c\) & \(1,1\) & \(-\gamma,\beta\) \\
\(d\) & \(\beta, -\gamma\) & \(0,0\) \\
\end{game}
    \caption{Pre-Transformation}
    \label{pd1}
  \end{subfigure} \begin{subfigure}{7cm}
\centering
\LARGE
\begin{game}{2}{2}[P1][P2] 
    & \(c\) & \(d\) \\
\(c\) & \(\hat{\alpha},\hat{\alpha}\) & \(-\hat{\gamma},\hat{\beta}\) \\
\(d\) & \(\hat{\beta}, -\hat{\gamma}\) & \(\hat{\rho}, \hat{\rho}\) \\
\end{game}
    \caption{Post-Transformation}
    \label{pd2}
  \end{subfigure}
  \caption{The Prisoner's Dilemma}\label{pd}
\end{figure}

Our third application is superficially quite distant from the topic of this paper. Nevertheless, as we will see, the mechanics will be such that our first proposition applies. Consider the following question. Take a standard infinitely repeated prisoner's dilemma (with perfect monitoring), with known \textit{monetary} payoffs to the two players. Suppose they are known to be exponential discounters with risk-averse utilities in money. Suppose, moreover, that it is known that cooperation can be sustained in a sub-game perfect equilibrium (SPE) of the infinitely-repeated game. In which other prisoner's dilemmas--which other ordinally equivalent games--must there be an SPE of the infinitely-repeated game in which cooperation can be sustained?

The normal forms of the games, pre- and post-transformation are depicted in Figure \ref{pd}. As they are prisoner's dilemmas \(\beta > 1\), \(\hat{\beta} > \hat{\alpha} > \hat{\rho} > - \hat{\gamma}\), and \(\gamma > 0\). Always cooperating on-path is sustainable as part of an SPE if and only if it can be sustained by ``grim-trigger:'' cooperate at every history in which only \((c,c)\) has been played; else, defect. This means tha it must be the case that
\(1 \geq (1-\delta) \beta\). But \(\delta\) is a number between \(0\) and \(1\), so we may treat it like a probability. Accordingly, letting \(\delta\) denote the probability of ``state'' \(1\), \(c_0 = c_1 = 1\), \(d_0 = \beta\), and \(d_1 = 0\). \(d \triangleright c\), and Proposition \ref{binaryprop} immediately applies:
\begin{corollary}
    There must exist an SPE that sustains correlation if and only if the punishment is harsher, \(\hat{\rho} \leq 0\); the cooperation payoff is larger, \(1 \leq \hat{\alpha}\); and the immediate benefit of betrayal is relatively smaller: either \(\hat{\beta} \leq \beta\), or 
    \[(1-\hat{\rho})(\beta-1) + \hat{\alpha} \geq \hat{\beta} > \beta\text{.}\]
\end{corollary}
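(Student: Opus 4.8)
The plan is to recognize this corollary as a direct application of Proposition \ref{binaryprop} once the repeated game is repackaged as a binary monotone decision problem. First I would record the standard repeated-game fact, asserted in the text, that permanent reversion to mutual defection is an SPE of either game (mutual defection is the stage Nash of a prisoner's dilemma, since $\hat\beta>\hat\alpha>\hat\rho>-\hat\gamma$ makes $d$ strictly dominant). Grim trigger is therefore the harshest credible punishment, so cooperation is sustainable by \emph{some} SPE if and only if it is sustainable by grim trigger; multiplying the one-shot-deviation inequality by $1-\delta$, this reads $u(\hat\alpha)\geq(1-\delta)u(\hat\beta)+\delta u(\hat\rho)$ post-transformation and $u(1)\geq(1-\delta)u(\beta)+\delta u(0)$ pre-transformation.

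Next I would fix the identification of states and beliefs. Treating $\delta$ as a belief, I let the high state $H$ (the ``today'' state, carrying the betrayal payoff) have probability $1-\delta$ and the low state $L$ (the ``future'' punishment state) have probability $\delta$, and set $c_L=c_H=1$, $d_L=0$, $d_H=\beta$, with $\hat c_L=\hat c_H=\hat\alpha$, $\hat d_L=\hat\rho$, $\hat d_H=\hat\beta$. With this labeling both actions are increasing in the state ($\beta>0$ and $\hat\beta>\hat\rho$), we have $d\triangleright c$, and the map is order-preserving because $\hat\alpha\geq\hat\rho$ (low state) and $\hat\beta\geq\hat\alpha$ (high state). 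The two grim-trigger inequalities become exactly $\mathbb{E}_\mu[u(c)]\geq\mathbb{E}_\mu[u(d)]$ and $\mathbb{E}_\mu[u(\hat c)]\geq\mathbb{E}_\mu[u(\hat d)]$, so the requirement that cooperation survive the transformation for \emph{every} patience level $\delta$ and \emph{every} risk-averse $u$ is precisely the statement that the lower action $a=c$ stays weakly optimal whenever it was, i.e. that the transformation reduces the DM's action. Proposition \ref{binaryprop} then says this holds if and only if $c$ and $d$ are made steeper.

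The remaining step is to unpack Definition \ref{steepdef} into the stated inequalities. Here $\mathcal A=\{L\}$ and $\mathcal B=\{H\}$, so condition (1) is $\hat d_L\leq d_L$, namely $\hat\rho\leq0$; condition (2) is $c_H\leq\hat c_H$, namely $1\leq\hat\alpha$; and the super-actuarial improvement (\ref{in1}), using $\min\{\hat\alpha,1\}=1$ from (2) and $a_\theta-b_\theta=1$, becomes
\[
1-\hat\rho \;\geq\; \frac{\hat\beta-\hat\alpha}{\min\{\hat\beta,\beta\}-1}.
\]
When $\hat\beta\leq\beta$ the right-hand side equals $(\hat\beta-\hat\alpha)/(\hat\beta-1)\leq1\leq1-\hat\rho$, so it holds automatically; when $\hat\beta>\beta$ it rearranges to $(1-\hat\rho)(\beta-1)+\hat\alpha\geq\hat\beta$. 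This reproduces exactly the two-branch condition on the betrayal payoff.

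I expect the only delicate point to be the necessity direction's treatment of the belief range: Proposition \ref{binaryprop} quantifies over all $\mu\in\Delta$, whereas the discount factor $\delta$ ranges only over $(0,1)$. I would argue this is immaterial, since the ``made steeper'' conditions are determined by the comparisons at the states of $\mathcal A$ and $\mathcal B$, which are already witnessed by interior beliefs, so the degenerate endpoints impose no extra restriction. Everything else—the reduction to grim trigger and the algebra of the super-actuarial inequality—is routine; the crux is simply verifying that this order-preserving, monotone repackaging faithfully embeds the cooperation problem into the hypotheses of Proposition \ref{binaryprop}.
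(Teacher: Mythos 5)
Your proposal is correct and follows exactly the paper's route: reduce sustainability to the grim-trigger one-shot-deviation inequality, treat \(\delta\) as a belief over two states so that cooperation and defection become a binary monotone menu with \(d \triangleright c\), invoke Proposition \ref{binaryprop}, and unpack Definition \ref{steepdef} algebraically into the three stated conditions. Your version is in fact slightly more careful than the paper's (which asserts the reduction with \(c_0=c_1=1\), \(d_0=\beta\), \(d_1=0\) and says the proposition ``immediately applies''): you fix the state labels so both actions are genuinely increasing, verify order-preservation, and correctly note that restricting beliefs to \(\delta\in(0,1)\) is harmless because the necessity counterexamples in Proposition \ref{binaryprop} use interior indifference beliefs.
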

Any strict increase in the payoff to \((d,d)\) can render cooperation unsustainable, no matter how the other payoffs are altered. Likewise, the payoff to \((c,c)\) cannot decrease if cooperation must survive. The immediate impact of betrayal is more subtle: it can either decrease (\(\hat{\beta} \leq \beta\)), which is obviously good for cooperation; or increase, in which case it cannot increase too much relative to the new punishment payoff \(\hat{\rho}\) and cooperation payoff \(\hat{\alpha}\).

\subsection{Introducing a Lower Bound on Risk Aversion}\label{lowerbound}

Suppose we discover that the DM is at least as risk averse as an agent with utility function \(v\) in the sense that the DM's utility is a strictly increasing and weakly concave transformation of \(v\). That is, we are made aware of a lower bound on the DM's risk aversion. When does this lower the action we know the DM takes versus the case in which our lower bound is risk-neutrality?

Suppose \(v\) is such that for all \(a \in A\) and \(b \in A_{\triangleright}^a\) and for all \(\theta \in \mathcal{A}\) and \(\theta' \in \mathcal{B}\), \(v(b_\theta) \leq b_{\theta}\), \(v(a_\theta) \leq a_\theta\), \(v(b_{\theta'}) \geq b_{\theta'}\), \(v(a_{\theta'}) \geq a_{\theta'}\). In this case, Inequality \ref{in1} becomes
\[\frac{v(a_\theta)-v(b_\theta)}{a_\theta - b_\theta} \geq \frac{v(b_{\theta'})-v(a_{\theta'})}{b_{\theta'} - a_{\theta'}}\text{,}\]
which must hold due to the concavity of \(v\) and the single-crossing differences assumption.

The point of all of this is to illustrate that under this variety of transformation--introducing a lower bound on risk aversion--our sufficient condition simplifies to a collection of ordinal conditions on the relationship between \(v(\cdot)\) and \(\cdot\); we get the cardinal actuarial-improvement inequalities for free. It is also notable that an increase in the lower bound on risk aversion need not lead to lower actions. To wit, if \(v(\cdot) > \cdot\) for all \(\cdot\) or  \(v(\cdot) < \cdot\), then irrespective of \(v\)'s shape, the necessary condition in Theorem \ref{steeptheorem} will not hold.

\subsection{Ending on a Negative Note}

Now let us continue with the theme we ended \(\S\)\ref{lowerbound} on. Namely, the necessary condition identified in Theorem \ref{steeptheorem} is quite strong and fails in many environments for common modifications. In particular, any change that increases--in turn, lowers--payoffs to every action in every state, violates relevant steepness. Thus, such a change does not reduce the DM's action, i.e., may result in an increased action. 

These changes are ubiquitous. They include, for instance, the addition or subtraction of a constant to the actions' monetary rewards. Consequently, changes to the DM's wealth do not reduce her action. Another example is \(\S\)\ref{invest}'s investment problem. There, we need the change in the return to be vaguely ``S-shaped,'' with an inflection point precisely at the risk-free rate. A global shift up (or down) in the return would not reduce the DM's action. A classic comparative statics environment is that of a monopolist choosing how much quantity to produce: its profit is \(\Pi(q) = q P_\theta(q) - c_\theta (q)\), where \(P_\theta\) is inverse demand and \(c_\theta\) is its cost. Theorem \ref{steeptheorem} indicates that a pointwise shift up or down in either demand or cost does not reduce the monopolist's action. Subsidies and taxes produce such pointwise shifts, making it impossible to robustly predict pass-through.

\bibliography{scs.bib}

\appendix

\section{Omitted Proofs}

\subsection{Proposition \ref{binaryprop} Proof}\label{binarypropproof}
\begin{proof}
    Let \(b \triangleright a\). A transformation reducing the DM's action is equivalent to the set of beliefs at which the DM prefers \(a\) to \(b\) increasing in size--in a set-inclusion sense--for any \(u \in \mathcal{U}\). Define \[\mathcal{C} \coloneqq \left\{\theta \in \Theta \colon a_{\theta} = b_{\theta}\right\} \text{,}\]
    and recall that \(\mathcal{A}\) and \(\mathcal{B}\) are the sets of states for which the DM strictly prefers \(a\) to \(b\) and vice-versa. Given \(u \in \mathcal{U}\) and for fixed pair of states \(\theta \in \mathcal{A}\) and \(\theta' \in \mathcal{B}\), let \(\mu_{u}^{\theta, \theta'}\) and \(\hat{\mu}_{u}^{\theta, \theta'}\) be the solutions (in \(\mu\)) to
    \[\mu u(a_{\theta'}) + (1-\mu) u(a_{\theta}) = \mu u(b_{\theta'}) + (1-\mu) u(b_{\theta})\text{,}\]
    and
    \[\mu u(\hat{a}_{\theta'}) + (1-\mu) u(\hat{a}_{\theta}) = \mu u(\hat{b}_{\theta'}) + (1-\mu) u(\hat{b}_{\theta})\text{,}\]
    respectively. 

    For any \(\theta \in \Theta\), let \(v_{\theta}\) denote the corresponding vertex (as a point in the simplex). Then, pre-transformation, the extreme points of the set of beliefs at which the DM prefers \(a\) to \(b\) is \[\left\{v_\theta \colon \theta \in \mathcal{A} \cup \mathcal{C}\right\} \cup \left\{\mu_{u}^{\theta, \theta'} \colon \theta \in \mathcal{A}, \ \theta' \in \mathcal{B}\right\}\text{,}\] i.e., the set of vertices at which \(a\) is uniquely optimal, and the points on the edges connecting the vertices at which \(a\) is uniquely optimal with the vertices at which \(b\) is uniquely optimal. The analogous set post-transformation is  \[\left\{v_\theta \colon \theta \in \mathcal{A} \cup \mathcal{C}\right\} \cup \left\{\hat{\mu}_{u}^{\theta, \theta'} \colon \theta \in \mathcal{A}, \ \theta' \in \mathcal{B}\right\}\text{.}\]
    By the Krein-Milman theorem, the sets of beliefs at which the DM prefers \(a\) to \(b\) pre- and post-transformation are the closed convex hulls of their extreme points. Accordingly, a transformation reduces the DM's action if and only if for any \(u\) and pair \(\left(\theta,\theta'\right) \in \mathcal{A} \times \mathcal{B}\), \(\mu_{u}^{\theta, \theta'} \leq \hat{\mu}_{u}^{\theta, \theta'}\).

    \medskip

    \noindent \(\left(\Leftarrow\right)\) Fix an arbitrary pair of states \(\theta \in \mathcal{A}\) and \(\theta' \in \mathcal{B}\) and an arbitrary (risk-averse) utility \(u \in \mathcal{U}\), and let \(\hat{b}_{\theta} \leq b_{\theta}\), \(a_{\theta'} \leq \hat{a}_{\theta'}\), and \(a\) be a super-actuarial improvement versus \(b\). Note that this is equivalent to
    \[\label{ina1}\tag{\(A1\)}\left(\frac{a_\theta - b_\theta}{\min\left\{\hat{a}_\theta, a_\theta\right\} - \hat{b}_\theta}\right) \frac{\hat{b}_{\theta'} - \hat{a}_{\theta'}}{\min\left\{\hat{b}_{\theta'}, b_{\theta'}\right\}-a_{\theta'}} \leq 1 \text{.}\]
     Define \(\underline{a}_{\theta} \coloneqq \min\left\{\hat{a}_\theta, a_\theta\right\}\) and \(\underline{b}_{\theta'} \coloneqq \min\left\{\hat{b}_{\theta'}, b_{\theta'}\right\}\). Then,
\[\begin{split}
        \hat{\mu}_{u}^{\theta, \theta'} = \frac{u\left(\hat{a}_{\theta}\right) - u\left(\hat{b}_{\theta}\right)}{u\left(\hat{a}_{\theta}\right) - u\left(\hat{b}_{\theta}\right) + u\left(\hat{b}_{\theta'}\right) - u\left(\hat{a}_{\theta'}\right)} &\geq \frac{u\left(\underline{a}_{\theta}\right) - u\left(\hat{b}_{\theta}\right)}{u\left(\underline{a}_{\theta}\right) - u\left(\hat{b}_{\theta}\right) + u\left(\hat{b}_{\theta'}\right) - u\left(\hat{a}_{\theta'}\right)} \\ &= \frac{\frac{u\left(\underline{a}_{\theta}\right) - u\left(\hat{b}_{\theta}\right)}{\underline{a}_{\theta} - \hat{b}_{\theta}}}{\frac{u\left(\underline{a}_{\theta}\right) - u\left(\hat{b}_{\theta}\right)}{\underline{a}_{\theta} - \hat{b}_{\theta}} + \frac{u\left(\hat{b}_{\theta'}\right) - u\left(\hat{a}_{\theta'}\right)}{\underline{a}_{\theta} - \hat{b}_{\theta}}}\\
        &\geq \frac{\frac{u\left(a_{\theta}\right) - u\left(b_{\theta}\right)}{a_{\theta} - b_{\theta}}}{\frac{u\left(a_{\theta}\right) - u\left(b_{\theta}\right)}{a_{\theta} - b_{\theta}} + \frac{u\left(\hat{b}_{\theta'}\right) - u\left(\hat{a}_{\theta'}\right)}{\underline{a}_{\theta} - \hat{b}_{\theta}}}\text{,}
    \end{split}\]
    where the first inequality follows from the definition of \(\underline{a}_\theta\) and the second from the Three-chord lemma (Theorem 1.16 in \cite*{phelps2009convex}).

    The last expression on the right-hand side can be rewritten as
    \[\tag{\(A2\)}\label{exa2} \frac{u\left(a_{\theta}\right) - u\left(b_{\theta}\right)}{u\left(a_{\theta}\right) - u\left(b_{\theta}\right) + \left(\frac{a_{\theta} - b_{\theta}}{\underline{a}_{\theta} - \hat{b}_{\theta}}\right)\left(\hat{b}_{\theta'} - \hat{a}_{\theta'}\right)\frac{u\left(\hat{b}_{\theta'}\right) - u\left(\hat{a}_{\theta'}\right)}{\hat{b}_{\theta'} - \hat{a}_{\theta'}}}\text{.}\]
    If \(\hat{b}_{\theta'} \geq b_{\theta'}\), then by the Three-chord lemma, Expression \ref{exa2} is weakly greater than
    \[\frac{u\left(a_{\theta}\right) - u\left(b_{\theta}\right)}{u\left(a_{\theta}\right) - u\left(b_{\theta}\right) + \left(\frac{a_{\theta} - b_{\theta}}{\underline{a}_{\theta} - \hat{b}_{\theta}}\right)\left(\frac{\hat{b}_{\theta'} - \hat{a}_{\theta'}}{b_{\theta'} - a_{\theta'}}\right)\left(u\left(b_{\theta'}\right) - u\left(a_{\theta'}\right)\right)}\text{,}\]
    which is weakly greater than
    \[\frac{u\left(a_{\theta}\right) - u\left(b_{\theta}\right)}{u\left(a_{\theta}\right) - u\left(b_{\theta}\right) + u\left(b_{\theta'}\right) - u\left(a_{\theta'}\right)} = \mu_{u}^{\theta, \theta'}\text{,}\]
    by Inequality \ref{ina1}.

    If \(\hat{b}_{\theta'} < b_{\theta'}\), then by the Three-chord lemma, Expression \ref{exa2} is weakly greater than
    \[\frac{u\left(a_{\theta}\right) - u\left(b_{\theta}\right)}{u\left(a_{\theta}\right) - u\left(b_{\theta}\right) + \left(\frac{a_{\theta} - b_{\theta}}{\underline{a}_{\theta} - \hat{b}_{\theta}}\right)\left(\frac{\hat{b}_{\theta'} - \hat{a}_{\theta'}}{\hat{b}_{\theta'} - a_{\theta'}}\right)\left(u\left(\hat{b}_{\theta'}\right) - u\left(a_{\theta'}\right)\right)}\text{,}\]
    which Inequality \ref{ina1} implies is weakly greater than
    \[\frac{u\left(a_{\theta}\right) - u\left(b_{\theta}\right)}{u\left(a_{\theta}\right) - u\left(b_{\theta}\right) + u\left(\hat{b}_{\theta'}\right) - u\left(a_{\theta'}\right)} \geq \frac{u\left(a_{\theta}\right) - u\left(b_{\theta}\right)}{u\left(a_{\theta}\right) - u\left(b_{\theta}\right) + u\left(b_{\theta'}\right) - u\left(a_{\theta'}\right)} = \mu_{u}^{\theta, \theta'}\text{,}\]
    where the inequality follows from the monotonicity of \(u\). 

    \medskip

    \noindent \(\left(\Rightarrow\right)\) Fix an arbitrary pair of states \(\theta \in \mathcal{A}\) and \(\theta' \in \mathcal{B}\). Recall the ``actuarial fairness inequality,'' Inequality \ref{in1app}:
    \[\label{in1app}\tag{\(1\)}\frac{\min\left\{\hat{a}_\theta, a_\theta\right\} - \hat{b}_\theta}{a_\theta - b_\theta} \geq \frac{\hat{b}_{\theta'} - \hat{a}_{\theta'}}{\min\left\{\hat{b}_{\theta'}, b_{\theta'}\right\}-a_{\theta'}}\text{.}\]
    There are six cases to consider, proved, with the exception of the fourth, by contraposition.
    \begin{enumerate}
        \item \(\hat{b}_\theta > b_\theta\);
        \item \(\hat{a}_{\theta'} < a_{\theta'}\);
        \item \(a_\theta \geq \hat{a}_{\theta}\) and \(\hat{b}_{\theta'} \geq b_{\theta'}\), but Inequality \ref{in1app} does not hold;
        \item \(\hat{a}_\theta > a_{\theta} > b_\theta \geq \hat{b}_\theta\) and \(b_{\theta'} > \hat{b}_{\theta'} > \hat{a}_{\theta'} \geq a_{\theta'}\);
        \item \(\hat{a}_\theta > a_{\theta} > b_\theta \geq \hat{b}_\theta\) and \(\hat{b}_{\theta'} \geq b_{\theta'}\), but Inequality \ref{in1app} does not hold; and
        \item \(a_{\theta} \geq \hat{a}_\theta\) and \(b_{\theta'} > \hat{b}_{\theta'} > \hat{a}_{\theta'} \geq a_{\theta'}\) but Inequality \ref{in1app} does not hold.
    \end{enumerate}
    \begin{claim}
        The cases overlap but are exhaustive.
    \end{claim}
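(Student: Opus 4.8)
The plan is to prove exhaustiveness by a decision tree on the four comparisons of \(\hat{b}_\theta\) with \(b_\theta\), of \(\hat{a}_{\theta'}\) with \(a_{\theta'}\), of \(\hat{a}_\theta\) with \(a_\theta\), and of \(\hat{b}_{\theta'}\) with \(b_{\theta'}\); and to establish overlap by exhibiting a single configuration lying in two cases at once. Throughout I would keep the ambient constraints in hand: since \(\theta \in \mathcal{A}\) and \(\theta' \in \mathcal{B}\) we have \(a_\theta > b_\theta\) and \(b_{\theta'} > a_{\theta'}\), and since the transformation is order-preserving we also have \(\hat{a}_\theta > \hat{b}_\theta\) and \(\hat{b}_{\theta'} > \hat{a}_{\theta'}\). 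Here ``exhaustive'' is meant in the sense needed for the contrapositive: every configuration in which the pair fails to be made steeper must land in one of Cases 1, 2, 3, 5, 6, with the remaining boundary configuration being Case 4.

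First I would dispatch the failures of the two dominance conditions of Definition \ref{steepdef}. If \(\hat{b}_\theta > b_\theta\) we are in Case 1; otherwise \(\hat{b}_\theta \leq b_\theta\). If in addition \(\hat{a}_{\theta'} < a_{\theta'}\) we are in Case 2; otherwise \(\hat{a}_{\theta'} \geq a_{\theta'}\). On the remaining branch both dominance conditions hold, so the pair fails to be made steeper \emph{precisely} when the super-actuarial inequality \ref{in1app} fails. I would then split this branch by the signs of \(\hat{a}_\theta - a_\theta\) and \(\hat{b}_{\theta'} - b_{\theta'}\). When \(a_\theta \geq \hat{a}_\theta\) and \(\hat{b}_{\theta'} \geq b_{\theta'}\), an inequality-failure is Case 3; when \(\hat{a}_\theta > a_\theta\) and \(\hat{b}_{\theta'} \geq b_{\theta'}\), chaining with \(a_\theta > b_\theta \geq \hat{b}_\theta\) yields the full ordering of Case 5; and when \(a_\theta \geq \hat{a}_\theta\) and \(b_{\theta'} > \hat{b}_{\theta'}\), chaining with \(\hat{b}_{\theta'} > \hat{a}_{\theta'} \geq a_{\theta'}\) yields the full ordering of Case 6. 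In these three sub-branches, if instead the inequality \emph{holds}, then all of Definition \ref{steepdef} holds and the pair is made steeper, so there is nothing to rule out and no case is needed.

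The crux is the last sub-branch, \(\hat{a}_\theta > a_\theta\) together with \(b_{\theta'} > \hat{b}_{\theta'}\), which chains into exactly the orderings of Case 4. Here I would observe that the minima in \ref{in1app} simplify to \(\min\{\hat{a}_\theta, a_\theta\} = a_\theta\) and \(\min\{\hat{b}_{\theta'}, b_{\theta'}\} = \hat{b}_{\theta'}\), so the left-hand side is \((a_\theta - \hat{b}_\theta)/(a_\theta - b_\theta)\) and the right-hand side is \((\hat{b}_{\theta'} - \hat{a}_{\theta'})/(\hat{b}_{\theta'} - a_{\theta'})\); using \(\hat{b}_\theta \leq b_\theta\) the left-hand side is at least \(1\), while using \(\hat{a}_{\theta'} \geq a_{\theta'}\) the right-hand side is at most \(1\), so \ref{in1app} holds automatically. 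This is the one place the argument really bites: it shows the \((>,<)\) sub-branch can never be an inequality-failure, so Cases 3, 5, 6 genuinely exhaust every failure surviving Cases 1 and 2, while the \((>,<)\) configuration itself is the made-steeper Case 4, treated directly rather than by contraposition. Finally, for overlap I would note that any configuration with \(\hat{b}_\theta > b_\theta\) and \(\hat{a}_{\theta'} < a_{\theta'}\)---which is consistent with the order-preserving constraints, as it constrains distinct coordinates---satisfies the hypotheses of both Case 1 and Case 2, so the cases are not mutually exclusive.
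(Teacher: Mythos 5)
Your proposal is correct and follows essentially the same route as the paper: Cases 1 and 2 absorb failures of the two dominance conditions, and the four sign arrangements of \(\hat{a}_\theta - a_\theta\) and \(\hat{b}_{\theta'} - b_{\theta'}\) on the remaining branch are precisely Cases 3--6, with the \((>,<)\) arrangement (Case 4) shown to make Inequality \ref{in1app} hold automatically. Your verification that Case 4 can never be a failure is exactly the computation the paper performs in its treatment of Case 4, and your explicit overlap example is a harmless addition the paper leaves implicit.
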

    \begin{proof}
        The first two establish the necessity of \(\hat{b}_\theta \leq b_\theta\) and \(\hat{a}_{\theta'} \geq a_{\theta'}\). There are four possible arrangements of \(\hat{a}_\theta\) and \(\hat{b}_{\theta'}\): 
        \[a_\theta \underset{\textcolor{OrangeRed}{<}}{\geq} \hat{a}_\theta, \quad \text{and} \quad \hat{b}_{\theta'} \underset{\textcolor{OrangeRed}{<}}{\geq} b_{\theta'}\]
        (black/black, red/red, red/black, and black/red), which are cases 3, 4, 5, and 6, respectively.\end{proof}
    
    \noindent \textbf{Case 1.} Let \(\hat{b}_{\theta} > b_{\theta}\). Let
\[u(x) = \begin{cases}
    x, \quad &\text{if} \quad x \leq \min\left\{a_{\theta}, \hat{b}_{\theta}\right\}\\
    \iota x + (1-\iota) \min\left\{a_{\theta}, \hat{b}_{\theta}\right\}, \quad &\text{if} \quad \min\left\{a_{\theta}, \hat{b}_{\theta}\right\} < x\text{,}
\end{cases}\]
for some \(\iota \in \left(0,1\right)\). Then,
\[\mu_u = \frac{\iota a_\theta + (1-\iota) \min\left\{a_{\theta}, \hat{b}_{\theta}\right\} - b_{\theta}}{\iota a_\theta + (1-\iota) \min\left\{a_{\theta}, \hat{b}_{\theta}\right\} - b_{\theta} + \iota\left(b_{\theta'}-a_{\theta'}\right)}\text{,}\]
and
\[\hat{\mu}_u = \frac{\iota \left(\hat{a}_\theta -\hat{b}_{\theta}\right)}{\iota \left(\hat{a}_\theta -\hat{b}_{\theta}\right) + \iota \left(\hat{b}_{\theta'} -\hat{a}_{\theta'}\right)} = \frac{\hat{a}_\theta -\hat{b}_{\theta}}{\hat{a}_\theta -\hat{b}_{\theta} + \hat{b}_{\theta'} -\hat{a}_{\theta'}}\text{.}\]
By construction, for all sufficiently small \(\iota > 0\), \(\mu_u  - \hat{\mu}_u > 0\).

\bigskip

\noindent \textbf{Case 2.} Let \(\hat{a}_{\theta'} < a_{\theta'}\). Let 
\[u(x) = \begin{cases}
    x, \quad &\text{if} \quad x \leq a_{\theta'}\\
    \iota x + (1-\iota) a_{\theta'}, \quad &\text{if} \quad a_{\theta'} < x\text{,}
\end{cases}\]
for some \(\iota \in \left(0,1\right)\). Then,
\[\mu_u = \frac{a_\theta - b_{\theta}}{a_\theta - b_{\theta} + \iota\left(b_{\theta'}-a_{\theta'}\right)}\text{,}\]
and
\[\hat{\mu}_u = \frac{\hat{a}_\theta -\hat{b}_{\theta}}{\hat{a}_\theta -\hat{b}_{\theta} + u\left(\hat{b}_{\theta'}\right) -\hat{a}_{\theta'}} \leq \frac{\hat{a}_\theta -\hat{b}_{\theta}}{\hat{a}_\theta -\hat{b}_{\theta} + \min\left\{a_{\theta'}, \hat{b}_{\theta'}\right\} -\hat{a}_{\theta'}} < 1\text{.}\]
By construction, for all sufficiently small \(\iota > 0\), \(\mu_u  - \hat{\mu}_u > 0\).

\bigskip

\noindent \textbf{Case 3.} If \(a_\theta \geq \hat{a}_{\theta}\) and \(\hat{b}_{\theta'} \geq b_{\theta'}\), Inequality \ref{in1app} simplifies to
\[\frac{\hat{a}_\theta - \hat{b}_\theta}{a_\theta - b_\theta} \geq \frac{\hat{b}_{\theta'} - \hat{a}_{\theta'}}{b_{\theta'}-a_{\theta'}}\text{.}\]
If this does not hold, \(\mu_u  - \hat{\mu}_u > 0\) for a risk-neutral DM.

\bigskip

\noindent \textbf{Case 4.} Let \(\hat{a}_\theta > a_{\theta} > b_\theta \geq \hat{b}_\theta\) and \(b_{\theta'} > \hat{b}_{\theta'} > \hat{a}_{\theta'} \geq a_{\theta'}\). Then, observe that
\[\frac{\underline{a}_{\theta} - \hat{b}_{\theta}}{a_\theta - b_\theta} = \frac{a_\theta - \hat{b}_{\theta}}{a_\theta - b_\theta} \geq 1 \geq \frac{\hat{b}_{\theta'} - \hat{a}_{\theta'}}{\hat{b}_{\theta'} - a_{\theta'}} = \frac{\hat{b}_{\theta'} - \hat{a}_{\theta'}}{\underline{b}_{\theta'} - a_{\theta'}}\text{,}\]
so Inequality \ref{in1app} must hold.

\bigskip

\noindent \textbf{Case 5.} Let \(\hat{a}_\theta > a_{\theta} > b_\theta \geq \hat{b}_\theta\) and \(\hat{b}_{\theta'} \geq b_{\theta'}\), but Inequality \ref{in1app} does not hold. Then, let
\[u(x) = \begin{cases}
    x, \quad &\text{if} \quad x \leq a_{\theta}\\
    \iota x + (1-\iota) a_{\theta}, \quad &\text{if} \quad a_{\theta} < x\text{,}
\end{cases}\]
for some \(\iota \in \left(0,1\right)\). Then,
\[\mu_u - \hat{\mu}_u = \frac{a_\theta - b_{\theta}}{a_\theta - b_{\theta} + \iota\left(b_{\theta'}-a_{\theta'}\right)} - \frac{\iota \hat{a}_\theta + (1-\iota) a_\theta -\hat{b}_{\theta}}{\iota \hat{a}_\theta + (1-\iota) a_\theta -\hat{b}_{\theta} + \iota\left(\hat{b}_{\theta'} -\hat{a}_{\theta'}\right)}\]
has the same sign as \[\frac{\hat{b}_{\theta'} - \hat{a}_{\theta'}}{b_{\theta'} - a_{\theta'}} - \frac{\iota \hat{a}_{\theta} + \left(1-\iota\right)a_{\theta} - \hat{b}_{\theta}}{a_{\theta} - b_{\theta}}\text{,}\]
which equals
\[\frac{\hat{b}_{\theta'} - \hat{a}_{\theta'}}{\underline{b}_{\theta'} - a_{\theta'}} - \frac{\iota \hat{a}_{\theta} + \left(1-\iota\right)\underline{a}_{\theta} - \hat{b}_{\theta}}{a_{\theta} - b_{\theta}}\text{.}\]
As Inequality \ref{in1app} does not hold, this expression is strictly positive for all sufficiently small \(\iota > 0\).

\bigskip

\noindent \textbf{Case 6.} Finally, let \(a_{\theta} \geq \hat{a}_\theta\) and \(b_{\theta'} > \hat{b}_{\theta'} > \hat{a}_{\theta'} \geq a_{\theta'}\) but Inequality \ref{in1app} does not hold. Let
\[u(x) = \begin{cases}
    x, \quad &\text{if} \quad x \leq \hat{b}_{\theta'}\\
    \iota x + (1-\iota) \hat{b}_{\theta'}, \quad &\text{if} \quad \hat{b}_{\theta'} < x\text{,}
\end{cases}\]
for some \(\iota \in \left(0,1\right)\). Then,
\[\mu_u - \hat{\mu}_u = \frac{a_\theta - b_{\theta}}{a_\theta - b_{\theta} + \iota b_{\theta'} + \left(1-\iota\right) \hat{b}_{\theta'} - a_{\theta'}} - \frac{\hat{a}_\theta -\hat{b}_{\theta}}{\hat{a}_\theta -\hat{b}_{\theta} + \hat{b}_{\theta'} - \hat{a}_{\theta'}}\]
has the same sign as \[\frac{\hat{b}_{\theta'} - \hat{a}_{\theta'}}{\iota b_{\theta'} + \left(1-\iota\right) \hat{b}_{\theta'} - a_{\theta'}} - \frac{\hat{a}_{\theta} - \hat{b}_{\theta}}{a_{\theta} - b_{\theta}}\text{,}\]
which equals
\[\frac{\hat{b}_{\theta'} - \hat{a}_{\theta'}}{\iota b_{\theta'} + \left(1-\iota\right) \underline{b}_{\theta'} - a_{\theta'}} - \frac{\underline{a}_{\theta} - \hat{b}_{\theta}}{a_{\theta} - b_{\theta}}\text{.}\]
As Inequality \ref{in1app} does not hold, this expression is strictly positive for all sufficiently small \(\iota > 0\). \end{proof}

\subsection{Theorem \ref{steeptheorem} Proof}\label{steepproof}
\begin{proof}
    First, we argue that \(\hat{a}^i_\theta \leq a^i_\theta\) for all \(a^i \in A_{\theta,\theta'} \setminus \left\{\underline{a}^{\left(\theta,\theta'\right)}\right\}\), for all \(\theta, \theta' \in \Theta\), is necessary for a transformation to reduce the DM's action. Suppose for the sake of contraposition that there exists some pair \(\theta, \theta' \in \Theta\) and action \(a^{i+1} \in A_{\theta,\theta'} \setminus \left\{\underline{a}^{\left(\theta,\theta'\right)}\right\}\) such that \(a_{\theta}^{i+1} < \hat{a}_{\theta}^{i+1}\).
    Let \[u(x) = \min\left\{x,\iota x + (1-\iota) \min\left\{a_{\theta}^{i}, \hat{a}_{\theta}^{i+1}\right\}\right\}\text{,}\]
    for some \(\iota \in \left(0,1\right)\). 
    
    Crucially, the structure placed on the decision problem and the restricted class of transformations--recall that they are ordinal-ranking preserving, across both states and actions--means that
    \begin{enumerate}
        \item\label{observation1} \(\hat{a}^j_{\theta} > \min\left\{a_{\theta}^{i}, \hat{a}_{\theta}^{i+1}\right\}\) and \(a^j_{\theta} > \min\left\{a_{\theta}^{i}, \hat{a}_{\theta}^{i+1}\right\}\) for all \(j < i\); 
        \item\label{observation2} \(\hat{a}^i_{\theta} > \hat{a}^{i+1}_{\theta} \geq \min\left\{a_{\theta}^{i}, \hat{a}_{\theta}^{i+1}\right\}\) and \(a^i_{\theta} \geq \min\left\{a_{\theta}^{i}, \hat{a}_{\theta}^{i+1}\right\}\);
        \item\label{observation3} \(\hat{a}^k_{\theta'} > \min\left\{a_{\theta}^{i}, \hat{a}_{\theta}^{i+1}\right\}\) and \(a^k_{\theta'} \geq \min\left\{a_{\theta}^{i}, \hat{a}_{\theta}^{i+1}\right\}\) for all relevant \(k\);
        and
        \item\label{observation4} \(a_{\theta}^{s} < a_{\theta}^{i+1} < \min\left\{a_{\theta}^{i}, \hat{a}_{\theta}^{i+1}\right\}\) for all \(s > i+1\).
    \end{enumerate}
    
    For any \(j < i\), the indifference belief along the edge between \(\theta\) and \(\theta'\) between actions \(a^j\) and \(a^i\), pre-transformation, is
    \[\frac{u(a^{j}_{\theta}) - u(a^{i}_{\theta})}{u(a^{j}_{\theta}) - u(a^{i}_{\theta}) + u(a^i_{\theta'}) - u(a^{j}_{\theta'})} = \frac{a^{j}_{\theta} - a^{i}_{\theta}}{a^{j}_{\theta} - a^{i}_{\theta} + a^i_{\theta'} - a^{j}_{\theta'}}\text{,}\]
    and the indifference belief along the edge between \(\theta\) and \(\theta'\) between actions \(a^j\) and \(a^i\), post-transformation, is
    \[\frac{u(\hat{a}^{j}_{\theta}) - u(\hat{a}^{i}_{\theta})}{u(\hat{a}^{j}_{\theta}) - u(\hat{a}^{i}_{\theta}) + u(\hat{a}^i_{\theta'}) - u(\hat{a}^{j}_{\theta'})} = \frac{\hat{a}^{j}_{\theta} - \hat{a}^{i}_{\theta}}{\hat{a}^{j}_{\theta} - \hat{a}^{i}_{\theta} + \hat{a}^i_{\theta'} - \hat{a}^{j}_{\theta'}}\text{,}\]
    where the two equalities hold as a result of Properties \ref{observation1} and \ref{observation3}. That is, by construction, each of the monetary payoffs lies above the inflection point in the utility function.
    
    Consequently,
    \[\tag{\(A3\)}\label{ex4}\begin{split}
        \underline{\mu} &\coloneqq \max_{j < i} \max \left\{\frac{u(a^{j}_{\theta}) - u(a^{i}_{\theta})}{u(a^{j}_{\theta}) - u(a^{i}_{\theta}) + u(a^i_{\theta'}) - u(a^{j}_{\theta'})}, \frac{u(\hat{a}^{j}_{\theta}) - u(\hat{a}^{i}_{\theta})}{u(\hat{a}^{j}_{\theta}) - u(\hat{a}^{i}_{\theta}) + u(\hat{a}^i_{\theta'}) - u(\hat{a}^{j}_{\theta'})}\right\}\\
        &= \max_{j < i} \max \left\{\frac{a^{j}_{\theta} - a^{i}_{\theta}}{a^{j}_{\theta} - a^{i}_{\theta} + a^i_{\theta'} - a^{j}_{\theta'}}, \frac{\hat{a}^{j}_{\theta} - \hat{a}^{i}_{\theta}}{\hat{a}^{j}_{\theta} - \hat{a}^{i}_{\theta} + \hat{a}^i_{\theta'} - \hat{a}^{j}_{\theta'}}\right\}
    \end{split}\]
    is strictly less than \(1\) for all \(\iota \in \left(0,1\right)\). 
    
    On the other hand, for all \(s \geq i+1\), the indifference belief along the edge between \(\theta\) and \(\theta'\) between actions \(a^s\) and \(a^i\), pre-transformation, is
    \[\frac{u(a^{i}_{\theta}) - u(a^{s}_{\theta})}{u(a^{i}_{\theta}) - u(a^{s}_{\theta}) + u(a^s_{\theta'}) - u(a^{i}_{\theta'})} = \frac{\iota a^{i}_{\theta} + (1-\iota) \min\left\{a_{\theta}^{i}, \hat{a}_{\theta}^{i+1}\right\}- a^{s}_{\theta}}{\iota a^{i}_{\theta} + (1-\iota) \min\left\{a_{\theta}^{i}, \hat{a}_{\theta}^{i+1}\right\}- a^{s}_{\theta} + \iota\left(a^{s}_{\theta'} - a^{i}_{\theta'}\right)}\text{,}\]
    where the equality is due to Properties \ref{observation2}, \ref{observation3}, and \ref{observation4}. That is, \(a_{\theta}^{i+1} < \hat{a}_{\theta}^{i+1}\) has allowed us to construct a utility function for which the monetary rewards to \(a^i\) and each higher \(a^s\) in state \(\theta\) are on opposite sides of the inflection point, pre-transformation. Accordingly, as \(\iota \downarrow 0\), 
    \[\overline{\mu} \coloneqq \min_{j > i} \frac{u(a^{i}_{\theta}) - u(a^{j}_{\theta})}{u(a^{i}_{\theta}) - u(a^{j}_{\theta}) + u(a^j_{\theta'}) - u(a^{i}_{\theta'})} = \min_{j > i} \frac{\iota a^{i}_{\theta} + (1-\iota) \min\left\{a_{\theta}^{i}, \hat{a}_{\theta}^{i+1}\right\}- a^{j}_{\theta}}{\iota a^{i}_{\theta} + (1-\iota) \min\left\{a_{\theta}^{i}, \hat{a}_{\theta}^{i+1}\right\}- a^{j}_{\theta} + \iota\left(a^{j}_{\theta'} - a^{i}_{\theta'}\right)}\]
    converges to \(1\). 
    
    Finally, using Properties \ref{observation2} and \ref{observation3}, the indifference belief between \(a^i\) and \(a^{i+1}\), post-transformation, is
    \[\hat{\mu} \coloneqq \frac{u(\hat{a}^{i}_{\theta}) - u(\hat{a}^{i+1}_{\theta})}{u(\hat{a}^{i}_{\theta}) - u(\hat{a}^{i+1}_{\theta}) + u(\hat{a}^{i+1}_{\theta'}) - u(\hat{a}^{i}_{\theta'})} = \frac{\hat{a}^{i}_{\theta} - \hat{a}^{i+1}_{\theta}}{\hat{a}^{i}_{\theta} - \hat{a}^{i+1}_{\theta} + \hat{a}^{i+1}_{\theta'} - \hat{a}^{i}_{\theta'}} < 1\text{,}\]
    for all \(\iota \in \left(0,1\right)\). 

    Restricting attention to the edge between \(\theta\) and \(\theta'\), pre-transformation, \(a^i\) is optimal for all beliefs in \(\left[\underline{\mu}, \overline{\mu}\right]\). As we have already argued, for all sufficiently low \(\iota > 0\), this interval is non-degenerate and as \(\iota \downarrow 0\), \(\overline{\mu}\uparrow 1\). Moreover, post-transformation \(a^{i}\) is superior to all lower actions for all beliefs in \(\left[\underline{\mu}, 1\right]\), which is non-degenerate. Post-transformation, \(a^{i}\) is strictly inferior to \(a^{i+1}\) for all beliefs in \(\left(\hat{\mu}, 1\right]\), which is non-degenerate. As 
    \[\left(\hat{\mu}, 1\right] \cap \left[\underline{\mu}, 1\right) \cap \left[\underline{\mu}, 1\right] \neq \emptyset\text{,}\] we conclude that there exists a belief \(\mu \in \Delta\) and a utility function \(u \in \mathcal{U}\) such that pre-transformation it is uniquely optimal for the DM to choose action \(a^i\), yet \(a^i\) is strictly inferior to some \(a^j\) with \(j > i\) after the transformation; \textit{viz.,} the transformation does not reduce the DM's action.

    \bigskip

    We have argued that \(\hat{a}^i_\theta \leq a^i_\theta\) for all \(a^i \in A_{\theta,\theta'} \setminus \left\{\underline{a}^{\left(\theta,\theta'\right)}\right\}\) for all \(\theta, \theta' \in \Theta\), is necessary for a transformation to reduce the DM's action. It remains to prove the second half of the theorem; namely, to show that \(\hat{a}^i_{\theta'} \geq a^i_{\theta'}\) for all \(a^i \in A_{\theta,\theta'} \setminus \left\{\overline{a}^{\left(\theta,\theta'\right)}\right\}\) for all \(\theta, \theta' \in \Theta\), is necessary for a transformation to reduce the DM's action. This proof mimics the first part nearly exactly, so we provide fewer details.
    
    Suppose for the sake of contraposition that there exists a pair \(\theta, \theta' \in \Theta\) and an action \(a^{i} \in A_{\theta,\theta'} \setminus \left\{\overline{a}^{\left(\theta,\theta'\right)}\right\}\) such that \(a_{\theta'}^{i} > \hat{a}_{\theta'}^{i}\). 

    Let \[u(x) = \min\left\{x, \iota x + (1-\iota) a_{\theta'}^{i}\right\}\text{,}\]
    for some \(\iota \in \left(0,1\right)\). \(\underline{\mu}\) is as defined in Expression \ref{ex4} and remains strictly less than \(1\) for all \(\iota \in \left(0,1\right)\)--note that in contrast to the first part, these monetary values all lie below \(u\)'s inflection point. On the other hand, as \(\iota \downarrow 0\), 
    \[\overline{\mu}^{\dagger} \coloneqq \min_{j > i} \frac{u(a^{i}_{\theta}) - u(a^{j}_{\theta})}{u(a^{i}_{\theta}) - u(a^{j}_{\theta}) + u(a^j_{\theta'}) - u(a^{i}_{\theta'})} = \min_{j > i} \frac{a^{i}_{\theta} - a^{j}_{\theta}}{a^{i}_{\theta} - a^{j}_{\theta} + \iota\left(a^{j}_{\theta'} - a^{i}_{\theta'}\right)}\]
    converges to \(1\).
    
    Finally, 
    \[\hat{\mu}^{\dagger} \coloneqq \frac{u(\hat{a}^{i}_{\theta}) - u(\hat{a}^{i+1}_{\theta})}{u(\hat{a}^{i}_{\theta}) - u(\hat{a}^{i+1}_{\theta}) + u(\hat{a}^{i+1}_{\theta'}) - u(\hat{a}^{i}_{\theta'})} = \frac{\hat{a}^{i}_{\theta} - \hat{a}^{i+1}_{\theta}}{\hat{a}^{i}_{\theta} - \hat{a}^{i+1}_{\theta} + u(\hat{a}^{i+1}_{\theta'}) - \hat{a}^{i}_{\theta'}} < 1\text{,}\]
    for all \(\iota \in \left(0,1\right)\). We conclude that there exists a belief \(\mu \in \Delta\) and a utility function \(u \in \mathcal{U}\) such that pre-transformation it is uniquely optimal for the DM to choose action \(a^i\), yet \(a^i\) is strictly inferior to some \(a^j\) with \(j > i\) after the transformation; \textit{viz.,} the transformation does not reduce the DM's action.
\end{proof}

\subsection{Proposition \ref{regularprop} Proof}\label{regularpropproof}
\begin{proof}
    Let \(b \triangleright a\) and suppose \(a\) and \(b\) are made commonly steeper. Let \(a \succ b\). We want to show first that \(a \succ \hat{b}\). If \(b = \hat{b}\) this is immediate. Now let \(b\) dominate a mixture of \(a\) and \(\hat{b}\), and suppose for the sake of contradiction that \(\hat{b} \succeq a\). Then, for some \(\lambda \in [0,1)\)
    \[b \succ \lambda a + (1-\lambda) \hat{b} \succeq a\text{,}\]
    which is false. Accordingly, \(a \succ \hat{b}\). 
    
    Second, we want to show that \(\hat{a} \succ \hat{b}\). If \(\hat{a} = a\), the result is immediate. Now let \(\hat{a}\) dominate a mixture of \(a\) and \(\hat{b}\). Thus, for some \(\gamma \in \left[0,1\right]\),
    \[\hat{a} \succ \gamma a + (1-\gamma) \hat{b} \succeq \hat{b}\text{,}\]
    as desired.

    Now let \(a \sim b\). We want to show first that \(a \succeq \hat{b}\). To that end, we observe that if \(b = \hat{b}\), \(a \sim b \sim \hat{b}\), as desired; and if \(b\) dominates a mixture of \(a\) and \(\hat{b}\), for some \(\lambda \in [0,1]\)
    \[a \sim b \succ \lambda a + (1-\lambda) \hat{b}\text{,}\]
    so \(a \succ \hat{b}\). If \(\hat{a} = a\), \[\hat{a} \sim a \succeq \hat{b}\text{,}\] as desired. Now let \(\hat{a}\) dominate a mixture of \(a\) and \(\hat{b}\). Thus, for some \(\gamma \in \left[0,1\right]\),
    \[\hat{a} \succ \gamma a + (1-\gamma) \hat{b} \succeq \hat{b}\text{,}\]
    as desired.\end{proof}

\subsection{Verification of Regularity}\label{regsec}

Here we verify that the following four properties of the preferences over actions, \(\succeq\), hold for certain non-EU preferences. The properties to be verified are
\begin{enumerate}
    \item \textbf{Strongly Monotone:} If \(a \gg b\) then \(a \succ b\);
    \item \textbf{Convex:} If \(a \succeq b\) then for all \(\lambda \in [0,1]\), \(\lambda a + (1-\lambda) b \succeq b\);
    \item \textbf{Complete:} For all \(a, b \in \mathbb{R}^{\Theta}\), \(a \succeq b\) or \(b \succeq a\); and 
    \item \textbf{Transitive:} If \(a \succeq b\) and \(b \succeq c\) then \(a \succeq c\).
\end{enumerate}

\bigskip

\noindent \textbf{Variational preferences.} A DM solves \[\max_{a \in A} \min_{p \in \Delta}\left\{\mathbb{E}_p[u(a_\theta)] + c(p)\right\}\text{,}\]
where \(c \colon \Delta \to \mathbb{R}_{+}\) is a convex function and \(u\) is concave and strictly increasing.

We assume completeness, and transitivity is immediate. To prove strong monoticity suppose \(a \gg b\). Letting \(p_a\) and \(p_b\) be the respective minimizers of \[\mathbb{E}_p[u(a_\theta)] + c(p) \quad \text{and} \quad \mathbb{E}_p[u(b_\theta)] + c(p)\text{,}\]
we have
\[\mathbb{E}_{p_a}[u(a_\theta)] + c(p_a) > \mathbb{E}_{p_a}[u(b_\theta)] + c(p_a) \geq \mathbb{E}_{p_b}[u(b_\theta)] + c(p_b)\text{,}\]
as desired. To prove convexity, suppose 
\[\mathbb{E}_{p_a}[u(a_\theta)] + c(p_a) \geq \mathbb{E}_{p_b}[u(b_\theta)] + c(p_b)\text{.}\]
Given \(\lambda \in \left[0,1\right]\), let \(p^\lambda\) minimize
\[\mathbb{E}_p[u(\lambda a_\theta + (1-\lambda) b_\theta)] + c(p)\text{.}\]
Then, we have
\[\begin{split}
    \mathbb{E}_{p^{\lambda}}u(\lambda a_\theta + (1-\lambda) b_\theta) + c(p^{\lambda}) &\geq \lambda \mathbb{E}_{p^{\lambda}}u(a_\theta) + (1-\lambda) \mathbb{E}_{p^{\lambda}}u(b_\theta) + c(p^{\lambda})\\
    &\geq \lambda \left[\mathbb{E}_{p_a}u(a_\theta) + c(p_a)\right] + (1-\lambda) \left[\mathbb{E}_{p_b}u(b_\theta) + c(p_b)\right]\\
    &\geq \mathbb{E}_{p_b}[u(b_\theta)] + c(p_b)\text{.}
\end{split}\]

\bigskip

\noindent \textbf{Smooth ambiguity.} The DM evaluates an action according to 
\[\int_{\Delta} \varphi \left(\int_{\Theta} u(a(\theta))d\mu(\theta)\right)d\pi\text{,}\]
where \(u\) and \(\varphi\) are strictly increasing and \(u\) and \(\varphi\) are concave.

Completeness and transitivity are immediate. To verify strong monotonicity, let \(a \gg b\), in which case
\[\int_{\Delta} \varphi \left(\int_{\Theta} u(a(\theta))d\mu(\theta)\right)d\pi > \int_{\Delta} \varphi \left(\int_{\Theta} u(b(\theta))d\mu(\theta)\right)d\pi\text{,}\]
i.e., \(a \succ b\). To verify convexity, suppose \(a \succeq b\). We have
\[\begin{split}
    &\int_{\Delta} \varphi \left(\int_{\Theta} u(\lambda a(\theta) + \left(1-\lambda\right) b_\theta)d\mu(\theta)\right)d\pi\\ &\geq \lambda \int_{\Delta} \varphi \left(\int_{\Theta} u(a(\theta))d\mu(\theta)\right)d\pi + (1-\lambda) \int_{\Delta} \varphi \left(\int_{\Theta} u(b(\theta))d\mu(\theta)\right)d\pi\\
    &\geq \int_{\Delta} \varphi \left(\int_{\Theta} u(b(\theta))d\mu(\theta)\right)d\pi\text{.}
\end{split}\]

\subsection{Binary State Derivation}\label{binaryenviron}
We consider the binary environment of \(\S\)\ref{sectwostates}: \(\Theta = \left\{0,1\right\}\); and \(A = \left\{a^1, \dots, a^m\right\} \subset \mathbb{R}\), with \(a^1 \ \triangleleft \ \cdots \ \triangleleft \ a^m\), 
so that 
\[a^1_{0} > \dots > a^m_{0} \quad \text{and} \quad a^1_{1} < \dots < a^m_{1}\text{.}\]
We also assume that no action is weakly dominated for a risk-neutral agent. This guarantees \citep*{battigalli2016note,weinstein2016effect} that no action is weakly dominated for a risk-averse agent. For \(i \in \left\{1,\dots,m-1\right\}\) let \(\mu_i\) denote the indifference belief between actions \(a^i\) and \(a^{i+1}\) (pre-transformation). By assumption \(\mu_1 < \cdots < \mu_{m-1}\). We prove the following result.
\begin{proposition}
    A transformation reduces the DM's action if for any action \(a^i \in A\), \(a^i\) and \(a^{i+1}\) are made steeper.
\end{proposition}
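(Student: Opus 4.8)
The plan is to work through the geometry of the upper envelope of expected-utility lines rather than to verify the hypotheses of Corollary \ref{corr1} head-on. The tempting route---showing that adjacent steepenings force \emph{every} pair \(a^i,a^j\) to be made steeper, so that Corollary \ref{corr1} applies verbatim---does not actually work: adjacent super-actuarial improvements do not chain up to non-adjacent ones, since the relevant cross terms are not controlled by the concavity of the original frontier alone (one can exhibit three actions with \((a^1,a^2)\) and \((a^2,a^3)\) made steeper but \((a^1,a^3)\) not). Because ``reduces the DM's action'' is a statement about optimal \emph{sets}, and in a binary state optima are governed only by \emph{consecutive} actions, I will exploit that structure directly.

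First I fix a concave \(u\) and write \(V_k(\mu)=(1-\mu)u(a^k_0)+\mu u(a^k_1)\) for the DM's payoff to \(a^k\) when \(\mu\) is the probability of the high state; this is affine in \(\mu\) with slope \(u(a^k_1)-u(a^k_0)\), strictly increasing in \(k\). Applying Proposition \ref{binaryprop} to each two-action submenu \(\{a^k,a^{k+1}\}\), the hypothesis that \(a^k\) and \(a^{k+1}\) are made steeper is equivalent to: for every concave \(u\), the pairwise indifference belief \(\mu_k(u)\) weakly rises to some \(\hat\mu_k(u)\ge\mu_k(u)\). Next, because no action is weakly dominated for a risk-averse agent (the cited consequence of no risk-neutral domination), every line \(V_k\) sits on the pre-transformation upper envelope, so \(a^k\) is optimal exactly on \([\mu_{k-1}(u),\mu_k(u)]\), and \(\mu_1(u)<\cdots<\mu_{m-1}(u)\).

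The heart of the argument is one inequality. For \(i<j\) let \(\hat\nu_{ij}(u)\) be the post-transformation indifference belief between \(a^i\) and \(a^j\); writing \(\hat V_j-\hat V_i\) as the telescoping sum \(\sum_{l=i}^{j-1}(\hat V_{l+1}-\hat V_l)\) and noting each summand is increasing in \(\mu\) and nonpositive below its own crossing \(\hat\mu_l(u)\), I obtain \(\hat\nu_{ij}(u)\ge\min_{i\le l<j}\hat\mu_l(u)\ge\min_{i\le l<j}\mu_l(u)=\mu_i(u)\), where the middle step is adjacent steepening and the last equality is the concavity of the original frontier. From this the strong-set-order domination at every \((\mu,u)\) follows: if \(a^i\in A^*\) and \(a^j\in\hat A^*\) with \(j>i\), then \(\mu\le\mu_i(u)\le\hat\nu_{ij}(u)\) forces \(\mu=\mu_i(u)\) together with a full tie \(a^i\sim\cdots\sim a^j\) both before and after, so \(a^j\in A^*\) and \(a^i\in\hat A^*\); otherwise the configuration is vacuous. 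This is precisely the requirement that \(A^*\) dominate \(\hat A^*\) in the strong set order, i.e.\ that the transformation reduces the DM's action.

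The main obstacle is that the post-transformation frontier need not remain concave: steepening can render an intermediate action dominated, so the post-optimal correspondence may skip indices and \(\hat A^*\) need not be an interval of consecutive actions. The displayed bound \(\hat\nu_{ij}(u)\ge\mu_i(u)\) is exactly what defuses this---by collapsing \(\min_{l\ge i}\mu_l(u)\) to \(\mu_i(u)\), the concavity of the \emph{original} frontier guarantees that even a ``long jump'' of the post-optimum past a newly dominated action cannot carry the chosen action above the pre-transformation optimum. Verifying this telescoping bound carefully, and checking the boundary (tie) cases, is where the real work lies.
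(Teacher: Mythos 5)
Your proof is correct, and it shares the paper's skeleton: apply Proposition \ref{binaryprop} to each adjacent pair to get \(\mu_k(u)\le\hat{\mu}_k(u)\) for every concave \(u\), use the cited fact that risk-neutral undominatedness implies risk-averse undominatedness so that the pre-transformation optima are exactly the intervals \(\left[\mu_{k-1}(u),\mu_k(u)\right]\), and then confront the real difficulty, namely that the transformation can render intermediate actions dominated, so that non-adjacent crossings become the operative ones. Where you genuinely depart from the paper is in how that difficulty is handled. The paper argues by contradiction after specializing ``without loss of generality'' to three actions: if the crossing \(\hat{\mu}^{\dagger}\) of the transformed lines for \(a^1\) and \(a^3\) were weakly below \(\mu_1\), the ordering of three-line crossings would force \(\hat{\mu}_2\le\hat{\mu}^{\dagger}\), giving the contradictory chain \(\mu_1<\mu_2\le\hat{\mu}_2\le\hat{\mu}^{\dagger}\le\mu_1\). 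Your telescoping inequality \(\hat{\nu}_{ij}(u)\ge\min_{i\le l<j}\hat{\mu}_l(u)\ge\mu_i(u)\) delivers the same control directly, for every pair \(i<j\), every \(u\), and arbitrary \(m\), with no reduction to the three-action case, and it feeds an explicit verification of the strong set order, a step the paper leaves implicit. It also keeps the dependence on \(u\) explicit throughout, where the paper's notation blurs risk-neutral and risk-averse crossings. Your opening motivation is likewise sound: adjacent steepenings need not compose into non-adjacent ones, which is exactly why this proposition is not a verbatim instance of Corollary \ref{corr1}.

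One small imprecision in your last step: when \(j>i+1\), the configuration \(a^i\in A^*\), \(a^j\in\hat{A}^*\) does not force ``a full tie \(a^i\sim\cdots\sim a^j\) both before and after''; it simply cannot occur. The chain \(\mu\le\mu_i(u)\le\hat{\nu}_{ij}(u)\le\mu\) collapses to equalities, and since \(\hat{\mu}_l(u)\ge\mu_l(u)>\mu_i(u)\) for \(l>i\), the minimum in your bound is attained at \(l=i\), so \(\hat{\mu}_i(u)=\mu_i(u)=\mu\). But then \(\mu<\mu_{i+1}(u)\le\hat{\mu}_l(u)\) for all \(l\ge i+1\) makes every summand \(\hat{V}_{l+1}-\hat{V}_l\) strictly negative at \(\mu\), whence \(\hat{V}_{i+1}(\mu)>\hat{V}_j(\mu)\), contradicting \(a^j\in\hat{A}^*\). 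The tie branch is genuinely possible only for \(j=i+1\), where it does yield \(a^{i+1}\in A^*\) and \(a^i\in\hat{A}^*\). Your ``otherwise the configuration is vacuous'' clause effectively covers this, but the correct dichotomy is by whether \(j=i+1\); this is a matter of wording, not a gap.
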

\begin{proof}
    Suppose for every action \(a^i \in A\), \(a^i\) and \(a^{i+1}\) are made steeper. For all \(i \in \left\{1,\dots,m-1\right\}\) let \(\hat{\mu}_i\) denote the indifference belief between actions \(\hat{a}^i\) and \(\hat{a}^{i+1}\) (pre-transformation). By the increase in steepness \(\mu_i \leq \hat{\mu}_i\) for all \(i\).

    Observe that the transformation's monotonicity guarantees that the state-wise rankings of the actions are preserved. Hence, if the transformation does not shrink the set of rationalizable actions for a risk-neutral DM, we are done as \(\hat{\mu}_1, \dots, \hat{\mu}_{m-1}\) are the only relevant indifference beliefs for the DM's decision, and they all move in the ``right'' direction.

    The transformation, may, of course, shrink the set of rationalizable actions. Without loss of generality, we specialize to the first three actions (we know that there must be at least three in this case), and suppose for the sake of contradiction that the indifference belief between \(\hat{a}^3\) and \(\hat{a}^1\), \(\hat{\mu}_{\dagger}\) is weakly less than \(\mu_1\). This implies that \(\hat{\mu}_2 \leq \hat{\mu}^{\dagger}\), and so we have the chain
    \[\mu_1 < \mu_2 \leq \hat{\mu}_2 \leq \hat{\mu}^{\dagger}\text{,}\]
    a contradiction.\end{proof}

\end{document}